\numberwithin{equation}{section}
\newtheorem{theorem}{Theorem}[section]
\newtheorem{lemma}[theorem]{Lemma}
\newtheorem{proposition}[theorem]{Proposition}
\newtheorem{definition}[theorem]{Definition}
\newcommand{\R}{{\mathbb R}}
\newcommand{\Hi}{{\mathcal{H}}}
\newcommand{\Bi}{{B}}
\newcommand{\Tr}{{\textup{Tr}}}
\newcommand{\C}{{\mathbb C}}
\newcommand{\scat}{{\mathbf{\mathit{S}}}}
\newcommand{\Sca}{{\mathbf{S}}}
\newcommand{\Asca}{{\mathbf{A}}}
\newcommand{\ato}{{ \vec{a}_{\vec{k},\lambda}(\vec{P})  }}
\newcommand{\att}{{ \vec{a}_{\vec{k},\lambda}(\vec{P})   }}
\newcommand{\yto}{{ \vec{a}_{\vec{k},\sigma,\lambda}(\vec{P})+(\sigma-I)\vec{k}  }}
\newcommand{\kto}{{ \vec{d}_{\vec{k},\lambda}(\vec{P})  }}
\newcommand{\ktt}{{ \vec{d}_{\vec{k},\lambda}(\vec{P})   }}
\newcommand{\vt}{{ \vec{v}_{\vec{k},\sigma,\lambda}(\vec{P})  }}
\newcommand{\vto}{{ \vec{v}_{\vec{k},\sigma_{1},\lambda}(\vec{P})  }}
\newcommand{\vtt}{{ \vec{v}_{\vec{k},\sigma_{2},\lambda}(\vec{P})   }}
\newcommand{\wt}{{\vec{w}_{\vec{k},\sigma,\lambda}(\vec{P})  }}
\newcommand{\wto}{{\vec{w}_{\vec{k},\sigma_{1},\lambda}(\vec{P})  }}
\newcommand{\wtt}{{ \vec{w}_{\vec{k},\sigma_{2},\lambda}(\vec{P})   }}
\newcommand{\mtoj}{{m_{j,\vec{k},\sigma_{1},\lambda}(\vec{P})  }}
\newcommand{\mttj}{{ m_{j,\vec{k},\sigma_{2},\lambda}(\vec{P})   }}
\title{An infinite-temperature limit for a quantum scattering process  }
\author{\textbf{Jeremy Clark}\\ jeremy.clark@fys.kuleuven.be \\ Katholieke Universiteit Leuven, Instituut voor Theoretische Fysica\\  Celestijnenlaan 200d, 3001 Heverlee, Belgium }
\begin{document}
\maketitle

\begin{abstract}
We study a quantum dynamical semigroup driven by a Lindblad generator with a deterministic Schr\"odinger part and  a noisy Poission-timed scattering part.  The dynamics describes the evolution of a test particle in $\R^{n}$, $n=1,2,3$, immersed in a gas, and the noisy scattering part is defined by the reduced effect of an individual interaction,  where the interaction between the test particle and a single gas particle is via a repulsive point potential.    In the limit that the mass ratio $\lambda=\frac{m}{M}$ tends to zero  and the collisions become more frequent as $\frac{1}{\lambda}$, we show that our dynamics $\Phi_{t,\lambda}$ approaches a limiting dynamics $\Phi_{t,\lambda}^{\diamond}$ with second order error.    Working in the Heisenberg representation, for $G\in \Bi(L^{2}(\R^{n}))$ $n=1,3$ we bound the difference between $\Phi_{t,\lambda}(G)$ and $\Phi_{t,\lambda}^{\diamond}(G)$ in operator norm proportional to $\lambda^{2}$. 

\end{abstract}

\section{Introduction}

The study of spatial decoherence in atom optics has inspired the derivation of certain Markovian master equations as models~\cite{Vacchini} for the reduced dynamics of particle interacting with an environment.    A Markovian approximation for the particle is made possible in part by the assumption that the degrees of freedom of the environment operate on a much shorter time scale than the particle.   In this case, the individual interactions between the particle and the environment are effectively instantaneous with respect to the time scale of the particle.      Thus many of the derivations of decoherence models in atom optics begin with an analysis of the scattering operator for the interaction between the particle and a single member of the reservoir~\cite{Joos,ESL,Sipe}.  A study of models where this scattering assumption can be made more rigorous can be found in~\cite{Two,Carlone,AsymSysm}.   

The current article concerns a quantum Markovian dynamics simulating the evolution of a large test particle of mass $M$ immersed in an inhomogeneous gas of light, high-speed particles of mass $m$.  We begin with a dynamics semigroup $\Phi_{t,\lambda}$ in the Heisenberg picture operating on $\Bi(L^{2}(\R^{n}) )$  and governed by an equation of the form:
\begin{align}\label{PoissonProc}
\frac{d}{dt}\Phi_{t,\lambda}(G)=i\big[ \frac{\vec{P} ^{2}}{2M}, \Phi_{t,\lambda}(G)\big]+\frac{1}{\lambda}\big(\Tr_{2}[ (I\otimes \rho) \Sca^{*}_{\lambda}(\Phi_{t,\lambda}(G)\otimes I)\Sca_{\lambda} ]-\Phi_{t,\lambda}(G)\big),
\end{align}
where $\Phi_{0,\lambda}(G)=G $, $\lambda=\frac{m}{M}$,  $\vec{P}$ is the vector of momentum operators, $\rho$ is a positive trace class operator on the Hilbert space of a single reservoir particle $L^{2}(\R^{n})$, and $\Sca_{\lambda}\in \Bi(L^{2}(\R^{n})\otimes L^{2}(\R^{n}))$ is the unitary scattering operator for a repulsive point interaction.      The scattering operator  $\Sca_{\lambda}$ for two particles which interact through a point interaction is determined through center-of-mass coordinates by the scattering operator for a single particle in a point potential.    Expressions for the scattering operator for a particle in a point potential can be found in~\cite{Solvable}.    Non-trivial point potentials only exist in dimensions $1$, $2$, and $3$.   Our parametrization of $\Sca_{\lambda}$ by $\lambda$ corresponds to  holding the strength of the interaction fixed while varying the mass ratio and a discussion can be found in~\cite{Clark}.

Previously it has been shown in~\cite{Clark} that, under some norm conditions on $\rho$,  there exists a $c>0$ such that for all $\lambda\geq 0$ and $G\in \Bi(L^{2}(\R^{n}))$
\begin{multline}\label{Previous}
\| \frac{1}{\lambda}\big(\Tr_{2}[ (I\otimes \rho) \Sca^{*}_{\lambda}(G\otimes I)\Sca_{\lambda} ]-G\big)\\- \big( i[V_{1}+\lambda V_{2}+\frac{\lambda}{2}\{\vec{P},  \vec{A}\} , G ]+\lambda (\varphi(G)-\frac{1}{2}\{ \varphi(I),G\}) \big) \|\leq c\lambda^{2}\|G\|_{wn},
\end{multline}
where $V_{1},V_{2},(\vec{A})_{j}$ are bounded real functions of the vector $\vec{X}$ of position operators, $\varphi$ is a bounded completely positive map of the form:
\begin{align}\label{Mult}
\varphi(G)=\sum_{j} \int_{\R^{n}}d\vec{k}\, m_{j,\vec{k}}^{*}\,G\, m_{j,\vec{k}},
\end{align}
where $m_{j,\vec{k} }$ are functions of $\vec{X}$, and $\|\cdot \|_{wn}$ is a weighted operator norm   of the form
$$\|G\|_{wn}=\|G\|+\| |\vec{X}|G\|+\|G|\vec{X}|\|\\ +\sum_{0\leq i,j \leq d}(\|X_{i}P_{j}G\|+\|GP_{j}X_{i}\|)+\sum_{ e_{1}+e_{2}\leq 3}\| |\vec{P}|^{e_{1}}G |\vec{P}|^{e_{2}}\|.$$     
The forms for $V_{1},V_{2},\vec{A}$ and $\varphi $ can be found in Appendix~\ref{Form} and depend on $\rho$.

In the current article, the main result is to extend~(\ref{Previous}) to an inequality bounding the difference between the semigroup $\Phi_{t,\lambda}$ and a limiting semigroup $\Phi_{t,\lambda}^{\diamond}$:
\begin{align}\label{MainEqu}
\|\Phi_{t,\lambda}(G)-\Phi_{t,\lambda}^{\diamond}(G)\|_{wn}\leq \lambda^{2}(t\wedge 1) \kappa_{T}\|G\|_{wn},     
\end{align}
for some fixed constant $\kappa_{T}$, for all $\lambda>0$, $t$ in a time interval $[0,T]$, and $G\in \Bi(L^{2}(\R^{n}))$.   $\Phi_{t,\lambda}^{\diamond}$ is a semigroup satisfying an equation of the form:
\begin{multline}\label{LimitEQN}
\frac{d}{dt}\Phi_{t,\lambda}^{\diamond}(G)= i[\frac{1}{2M}(\vec{P}+\lambda M \vec{A})^{2}+V_{1}+\lambda V_{2} , \Phi_{t,\lambda}^{\diamond}(G) ]\\ +\lambda \big(\varphi(\Phi_{t,\lambda}^{\diamond}(G))-\frac{1}{2}\{\Phi_{t,\lambda}^{\diamond}(G),\varphi(I)\}            \big).
\end{multline}
Notice that we have added the second order term $\frac{i}{2}\lambda^{2}M[\vec{A}^{2},\cdot]$ to the generator to complete the square with $\frac{i}{2M}[\vec{P}^{2},\cdot ]$.   The dynamics $\Phi_{t,\lambda}$ can be constructed through a Dyson series since $\varphi$ is bounded and the left term of the generator written in~(\ref{LimitEQN}) describes an electromagnetic field and is guaranteed to generate a well defined  group of isometries even for certain classes of unbounded $V_{1}$, $V_{2}$, and $\vec{A}$~\cite{EM}.   The equations~(\ref{PoissonProc}) and~(\ref{LimitEQN}) formally have the Lindblad form~\cite{Lindblad} except that the Hamiltonian part of the generator is unbounded in both cases due to the presence of  $\vec{P}$.  

The equations~(\ref{PoissonProc}) and~(\ref{LimitEQN}) should be regarded as formal, since the generators involve unbounded operators (namely the momentum operators $P_{j}$).    Lindblad equations with unbounded generators can be made rigorous using form-generator maps $\mathcal{L}(\psi_{1};G; \psi_{2})\rightarrow \C$ where $\psi_{1},\psi_{2}$ are in a dense domain $D\subset L^{2}(\R^{n})$ and $G\in \mathcal{B}(L^{2}(\R^{n}))$.   A solution $\Phi_{t}:\Bi(L^{2}(\R^{n}))$ should satisfy
$$\frac{d}{dt}\langle \psi_{1}| \Phi_{t}(G) \psi_{2} \rangle= \mathcal{L}(\psi_{1};\Phi_{t}(G);\psi_{2}),$$ 
for $\Phi_{0}(G)=G\in \Bi(\L^{2}(\R^{n}))$ and all $\psi_{1},\psi_{2}\in D$. 
The form $\mathcal{L}(\psi_{1};G;\psi_{2})$ in antilinear in $\psi_{1}$ and linear in $G$ and $\psi_{2}$, and its action is always clear from a formal expression for the unbounded Lindblad generator.    Form generators are discussed in~\cite{Cheb,Holevo2} and  references therein.   It is shown  in Lemma~\ref{CPSG}, that a rigorous interpretation of~(\ref{PoissonProc}) and~(\ref{LimitEQN})  yields the existence of unique semigroups $\Phi_{t,\lambda}$, $\Phi_{t,\lambda}^{\diamond}$ of completely positive maps satisfying $\Phi_{t,\lambda}(I)=\Phi_{t,\lambda}^{\diamond}(I)=I$.    For an introductory discussion of quantum dynamical semigroups see~\cite{Dyn}. 

The dynamics $\Phi_{t,\lambda}$ can be interpreted as the reduced dynamics, in the Heisenberg picture, for a test particle immersed in a spatially inhomogeneous gas where the effective state of the gas particles seen by the test particle is $\rho$.   The use of the scattering matrix $\Sca_{\lambda}$ in the noise term of~(\ref{PoissonProc}) assumes that the gas is sparse enough so that non-trivial interactions between the test particle and gas particles occur individually and that interactions occur on a faster time scale than the kinetic motion of the test particle.  If the duration of individual interactions is negligible compared to the time scale of the kinetic evolution of the test particle, the relevant information of an interaction is the translation of the ``before picture" to the ``after picture" determined by the scattering operator. 

The limiting regime $\lambda=\frac{m}{M}\ll 1$ corresponds to a reservoir of low mass, high speed gas particles.  By expanding our noise term in~(\ref{PoissonProc}) around $\lambda=\frac{m}{M}=0$ with the test particle mass $M$ and the effective density matrix of a single gas particle $\rho$ held fixed, the fixed momentum distribution $\rho(\vec{p},\vec{p})$ determines growing velocities as $\vec{v}(\vec{p})=\frac{\vec{p}}{m}=\frac{\vec{p}}{\lambda\,M}$.  For the same reason, small $\lambda$ corresponds to a high temperature regime: $E(\vec{p})=\frac{\vec{p}^{2}}{2m}=\frac{\vec{p}^{2}}{2\lambda M}$.  The noise term in~(\ref{PoissonProc}) has $\lambda$ dependence in the interaction frequency coefficient $\frac{1}{\lambda}$ which grows on the order of the reservoir particle speeds and $\lambda$ dependence in the scattering operators $\Sca_{\lambda}$.  In the limit $\lambda\rightarrow 0$, $\Sca_{\lambda}$ converges weakly to the identity operator.  This has the interpretation of quantum tunneling, since a fast reservoir particle will tend to tunnel over the potential of the test particle.  Without the frequency coefficient $\frac{1}{\lambda}$, the noise term would then tend to zero as $\lambda\rightarrow 0$.  Intuitively, a well-defined limiting dynamics emerges since the increased rate of tunneling through the potential of the test particle is compensated by the increased frequency of opportunities for collisions with the high speed gas particles.  

By expanding $\lambda$ in the noise term around the infinite temperature regime $\lambda=0$ only to first order, our limiting model neglects a description of energy relaxation.  This disadvantage is shared by the models~\cite{Joos,ESL} which are derived in similar limits involving expressions for scattering and small mass ratio $\frac{m}{M}\ll 1$.  Since the limit $\lambda\rightarrow 0$ involves the particles in the reservoir moving comparatively infinitely fast compared to the test particle, the first order approximation yields an environmental noise whose action is invariant of the momentum of the test particle.  Clearly, an environmental noise term which is invariant of test particle's momentum can not describe the relaxation of the particle's state to the temperature of the reservoir.  To see the invariance mathematically, let $W_{(\vec{q},\vec{p})}=e^{i\vec{q}\vec{P}+i\vec{p}\vec{X}}$ be the Weyl operator corresponding to a space shift by $\vec{q}$ and a momentum boost by $\vec{p}$, then the noise part of the generator in~(\ref{LimitEQN}) satisfies the momentum boost covariance:
\begin{align}\label{Trans}
\varphi(W_{(0,\vec{p})}^{*}\,G\,W_{(0,\vec{p})})-\frac{1}{2}\{ \varphi(I), W_{(0,\vec{p})}^{*}\,G\,W_{(0,\vec{p})}\} =   W_{(0,\vec{p})}^{*}\big(\varphi(G)-\frac{1}{2}\{\varphi(I), G \}\big)   W_{(0,\vec{p})}.
\end{align}
We conjecture that energy relaxation will be described by the dynamics generated by a second-order expansion in $\lambda$ of the noise term in~(\ref{PoissonProc}).  Existence of dynamical semigroups with unbounded Linblad generators satisfying momentum boost covariance~(\ref{LimitEQN}) is briefly discussed in~\cite{Cov}.   

Since $\rho$ is a trace class operator, it is Hilbert Schmidt, and thus the integral kernel $\rho(\vec{x}_{1},\vec{x}_{2})$ in the position basis is in $L^{2}(\R^{n}\times \R^{n})$.   Hence $\rho(\vec{x}_{1},\vec{x}_{2})$ is small (at least in Hilbert-Schmidt norm) outside of some compact set.   We interpret the dynamics $\Phi_{t,\lambda}$ as valid for a finite time period in which, intuitively, the  test particle is contained in a region where the effective state $\rho$ is relevant.  In the limiting dynamics $\Phi_{t,\lambda}^{\diamond}$, $V_{1}+\lambda V_{2}$ acts as a field potential and  $\lambda M\vec{A}$ acts as a vector potential. The presence of the $\vec{A}$, $V_{1}+\lambda V_{2}$ terms and also the form of the map $\varphi$ for the limiting dynamics $\Phi_{t,\lambda}^{\diamond}$ differs from those yielded by a related limit (involving a scattering process for a massive particle in a gas) found in~\cite{ESL,Sipe}, and the difference is due to the spatial inhomogeneity of the gas described by $\rho$.   In their setup, the limiting evolution of an observable (i.e. the Heisenberg representation) $G_{t}$ satisfies a differential equation of the form
\begin{align}\label{Before}
 \frac{d}{dt}G_{t}= i\big[\frac{\vec{P}^{2}}{2M}, G_{t}\big]+\int_{\R^{3}}d\vec{k}\, r(\vec{k})\big(e^{i\vec{X}\vec{k} }G_{t} e^{-i\vec{X}\vec{k} }-G_{t}\big), 
\end{align}
for a specified Poisson density rate $r(\vec{k})$ determining the rate of momentum boosts $\vec{k}$ transferred to the test particle.   

  We can make a connection between our limiting dynamics~(\ref{LimitEQN}) and an equation of the form~(\ref{Before}) by considering a parametrized collection of positive trace-class operators $\rho_{\epsilon}$ which will describe a spatially uniform environment in the limit $\epsilon\rightarrow 0$. Define $\rho_{\epsilon}$ through integral kernels in the momentum representation as
$$\rho_{\epsilon}(\vec{k}_{1},\vec{k}_{2})= \epsilon^{-(n-1)} \int_{\R^{n}} d\vec{k}\,P(\vec{k})\,\frac{e^{-\frac{1}{8\epsilon}(\vec{k}_{1}+\vec{k}_{2}-2\vec{k})^{2}-\frac{1}{8\epsilon}(\vec{k}_{1}-\vec{k}_{2})^{2}}  }{\sqrt{2\pi \epsilon}}    $$
for some probability density $P(\vec{k})$ on.   As $\epsilon\rightarrow 0$, $\rho_{\epsilon}(\vec{k}_{1},\vec{k}_{2})$ converges to zero for $\vec{k}_{1}\neq \vec{k}_{2}$ and asymptotically has the order of $\epsilon^{-(n-1)}P(\vec{k})$  for $\vec{k}=\vec{k}_{1}=\vec{k}_{2}$.  Using the expressions in Appendix~\ref{Form}, we see that $V_{1}(\epsilon)$, $V_{2}(\epsilon)$, and $\vec{A}(\epsilon)$ converge to constants (and thus can be gauged out), and $\varphi_{\epsilon}(G) $ converges to an expression as in~(\ref{Before}) which is proportional to 
\begin{align}\label{Ancient}
\int_{\R^{n}} d\vec{k}\, P(\vec{k})\,\int_{|\vec{k}|=|\vec{v}|}d\vec{v}\,e^{i\vec{X}(-\vec{v}+\vec{k})}G e^{-i\vec{X}(-\vec{v}+\vec{k})}, 
\end{align}
so that $\varphi_{0}(G)-\frac{1}{2}\{\varphi_{0}(I),G\}$ has the form of the noise in~(\ref{Before}).  The natural interpretation is that an incoming reservoir particle of momentum $\vec{k}$ in momentum transfers of $-\vec{v}+\vec{k}$ with $\vec{v}$ on the shell of radius $|\vec{k}|$.   The Kraus decomposition suggested by~(\ref{Ancient}) is an integral combination of conjugation by unitary Weyl operators corresponding to momentum shifts.      
 This contrasts with the general situation for $\varphi$ in which the terms $m_{j,\vec{k}}$ determining the Kraus decomposition~(\ref{General}) are non-unitary combinations of Weyl operators  $e^{i\vec{X}(-\vec{v}+\vec{k})}$ for fixed $\vec{k}$ and varying $\vec{v}$ on the surface $|\vec{v}|=|\vec{k}|$.

   The dynamics $\Phi_{t,\lambda}$ can be constructed as
\begin{align}\label{PoissonInt}
\Phi_{t,\lambda}(G)=e^{-\frac{t}{\lambda} }\sum_{n=0}^{\infty}\frac{1}{\lambda^{n} }\int_{0\leq t_{1}\cdots t_{n} \leq t} F_{t-t_{n}}\mathcal{P}_{\lambda} \cdots  \mathcal{P}_{\lambda}F_{t_{1}}(G),
\end{align}
where $\mathcal{P}_{\lambda}(G)=\Tr_{2}[(I\otimes \rho)\Sca_{\lambda}^{*}(G\otimes I) \Sca_{\lambda}]$ and $F_{t}$ is generated by $i[\frac{\vec{P}^{2}}{2M},\cdot]$.     Since $\mathcal{P}_{\lambda}$ and $F_{t}$ are completely positive maps, compositions and sums of them will be also, and hence the $\Phi_{t,\lambda}$'s are completely positive.   The construction~(\ref{PoissonInt}) lends to the interpretation of the dynamics $\Phi_{t,\lambda}$ as describing a free particle (no field potential) with Poisson timed collisions occurring at rate $\frac{1}{\lambda}$ with gas particles with statistical outcomes governed by $\mathcal{P}_{\lambda}$.    The limiting dynamics $\Phi_{t,\lambda}^{\diamond}$ can be constructed as  a Dyson series using the group $F_{t}^{\diamond}$ (defined below) and perturbative part $\lambda(\varphi(\cdot)-\frac{1}{2}\{\varphi(I),\cdot\})$.    However, the complete positivity of the maps $\Phi_{t,\lambda}^{\diamond}$ is not apparent without the analysis in the proof of Lemma~\ref{CPSG}.

The strategy of our proof of~(\ref{MainEqu}) is based on showing that $\Phi_{t,\lambda}$ has second-order in $\lambda$ error for an integral equation that the dynamics $\Phi_{t,\lambda}^{\diamond}$ solve.  
The basic idea is that a process close to solving an integral equation is close to the solution of the integral equation.    We show that the  error $E_{t}^{\prime}(\Phi_{\cdot,\lambda}(G))$ for the process $\Phi_{r,\lambda}(G)$ at time $t$ defined by
\begin{align}\label{DysonIntro2}
E_{t}^{\prime}(\Phi_{\cdot,\lambda}(G))=F_{t}^{\diamond}(G)+\lambda\int_{0}^{t}ds\, F_{t-s}^{\diamond}\big(\varphi(\Phi_{s,\lambda}(G))-\frac{1}{2}\{ \Phi_{s,\lambda}(G), \varphi(I)\} \big) -\Phi_{t,\lambda}(G),
\end{align}
where $F_{t}^{\diamond}$ is the unitary evolution generated by $i[ \frac{1}{2M}(\vec{P}+\lambda M\vec{A})^{2}+V_{1}+\lambda V_{2}, \cdot ]$, is uniformly bounded by a constant multiple of $\lambda^2\|G\|_{wn}$ for all $G\in \Bi(L^{2}(\R^{n}))$ and times $t$ in a finite time interval $[0,T]$.   With $\Phi_{t,\lambda}$ replaced by $\Phi_{t,\lambda}^{\diamond}$ in~(\ref{DysonIntro2}), the error on the left side of the equation becomes zero, and we can apply the elementary inequality~(\ref{Pert}) to bound the norm of the difference between $\Phi_{t,\lambda}(G)$ and $\Phi_{t,\lambda}^{\diamond}(G)$.  

  A solution to a differential equation can be written as the solution to different integral equations and we choose~(\ref{DysonIntro2}) because the perturbative part $\varphi(\cdot)-\frac{1}{2}\{\varphi(I),\cdot\}$ is a bounded map.  However,   using the bound~(\ref{Previous}), it is more natural to bound the error $E_{t}(\Phi_{\cdot,\lambda}(G))$ in
\begin{multline}\label{DysonIntro}
E_{t}(\Phi_{\cdot,\lambda}(G))=F_{t}(G)+\int_{0}^{t}ds\, F_{t-s}\big( i[V_{1}+\lambda V_{2}+\frac{\lambda}{2}\{\vec{P},  \vec{A}\}+\frac{\lambda^{2}}{2}M\vec{A}^{2} , \Phi_{s,\lambda}(G) ]\\+\lambda \varphi(\Phi_{s,\lambda}(G))-\frac{\lambda}{2}\{ \varphi(I),\Phi_{s,\lambda}(G)\} \big) -\Phi_{t,\lambda}(G).
\end{multline}
   Since the perturbative part of the integral equation~(\ref{DysonIntro}) involves unbounded terms (the momentum operators $(\vec{P})_{j}$ are unbounded), we can not apply Proposition~\ref{Pert} here.   Thus we use Proposition~\ref{PertProp2} to relate the error $E_{t}(\Phi_{\cdot,\lambda}(G))$ to the error $E_{t}^{\prime}(\Phi_{\cdot,\lambda}(G))$.    A basic necessity of our technique is that we will need to verify that the semigroup $\Phi_{t,\lambda}$ is minimally well behaved with respect to the weighted norm $\|\cdot\|_{wn}$ in the sense that there exists an $ \alpha\geq 0$ such that
\begin{align}\label{Trip}
  \|\Phi_{t,\lambda}(G)\|_{wn}\leq e^{\alpha\, t}\|G\|_{wn},
  \end{align}
for all $G\in \Bi(L^{2}(\R^{n}))$, $t,\lambda>0$.

    In Section~\ref{DysonGen}, we develop general conditions to guarantee that a semigroup obeys an inequality of form~(\ref{Trip}) for some weighted operator norm.  Section~\ref{DysonSpec} checks these conditions specifically for $\Phi_{t,\lambda}$ and $\|\cdot\|_{wn}$ and gives a bound of the type~(\ref{Trip}) which will, importantly, be independent of $\lambda$.   Finally, Section~\ref{SectMain} contains Theorem~\ref{Main}, the proof of which gives the details for the inequality~(\ref{MainEqu}).

\section{Bounding weighted norms of evolved observables }\label{DysonGen}
  In this section, we develop conditions for attaining inequalities of the form 
\begin{align}\label{Minimal}
\|\Phi_{t}(G)\|_{\mathbf{Y}}\leq e^{\alpha\,t}\|G\|_{\mathbf{Y}},
\end{align}
for $\alpha \geq 0$,where $\Phi_{t}:\Bi(\Hi)\rightarrow \Bi(\Hi)$ is a semigroup satisfying  an integral equation
\begin{align}\label{BasicIntEqn}
\Phi_{t}=F_{t}+\int_{0}^{t}ds\, F_{t-s}\Psi \Phi_{s},
\end{align}
 $F_{s},\Psi:\Bi(\Hi)\rightarrow \Bi(\Hi)$ for a complex Hilbert space $\Hi$, $F_{s}$ forms a group (with a possibly unbounded generator),  $\Psi$ is bounded, and $\|\cdot\|_{\mathbf{Y}}$ is an weighted operator norm determined by  densely defined closed operators $Y_{1,j}$, $Y_{2,j}$, $j=1,\cdots, n$ as
$$\|G\|_{\mathbf{Y} }=\|G\|+\sum_{j=1}^{n} \|Y_{1,j}GY_{2,j}^{*}\|.$$
Since the adjoints $Y_{1,j}^{*}$ and $Y_{2,j}^{*}$ are densely defined, the operator $Y_{1,j}GY_{2,j}^{*}$ should be interpreted as determined by a densely defined quadratic form $F(\psi_{1};\psi_{2})= \langle Y_{1,j}^{*}\psi_{1}|G Y_{2,j}^{*} \psi_{2}\rangle$ when $F$ is bounded.   Clearly $\|G\|_{\mathbf{Y}}$ may be infinite for some $G\in \Bi(\Hi)$, so we look a the subspace $\mathcal{A}\subset \Bi(\Hi)$ with finite norm.    The inequality~(\ref{Minimal}) just shows that the dynamics $\Phi_{t}$ is minimally well behaved with respect to the weighted norm $\|\cdot\|_{\mathbf{Y}}$.

 Our technique for bounding $\|\Phi_{t}(G)\|_{\mathbf{Y}}$ relies on the assumption that maps $\Psi_{Y,j,s}$ of the form
\begin{align}\label{Commy}
\Psi_{Y,j}(G)= \Psi(Y_{1,j}GY_{2,j}^{*})-Y_{1,j}\Psi(G)Y_{2,j}^{*}
\end{align}
are bounded.   In practice, it may be useful to shift this condition to certain ``dilated" operators related to $Y_{1,j}$ and $Y_{2,j}$.   For instance if $\Hi=L^{2}(\R^{n})$ and $Y_{1,j}=Y_{2,j}=|\vec{P}|$, then it becomes possible to look at~(\ref{Commy}) with $Y_{1,j}$ replaced by $\vec{P}$ and $Y_{2,j}^{*}$ replaced by $(\vec{P})^{*}$.    The dilations are natural, since we have intertwining relations such as $\tau_{k}\vec{P}=(\vec{P}-\vec{k})\tau_{k}$  for $\tau_{k}=e^{i\vec{k}\vec{X}}$, which we use in the analysis of Section~\ref{DysonSpec}.   In this situation, $\Psi_{Y,j}$ would be a  map from $\Bi(\Hi)$ to $\Bi(\Hi\otimes \C^{n})$, and $\Psi$ is interpreted to act component-wise.     We formalize the concept of a  ``dilation" of an operator in the following definition.  
\begin{definition}
Let $A$ be a closed operator with a dense domain in a Hilbert space $\Hi$.    A dilation of $A$ is a closed operator
$$
\hat{A}: \textup{D}(A)\subset \Hi\rightarrow \Hi\otimes \Hi_{dil},
$$
where $\Hi_{dil}$ is a Hilbert space and $\hat{A}$ satisfies 
 $$\hat{A}^{*}\hat{A}=A^{*}A.$$

\end{definition}

The following lemma shows how dilating an operator in certain expressions leaves the norm unchanged.  
\begin{lemma}\label{AdjDil}
Let $A,B$ be closed operators on a Hilbert space $\Hi$ with dilations $\hat{A},\hat{B}$  corresponding  to dilation spaces $\Hi_{dil}^{(1)}$ and   $\Hi_{dil}^{(2)}$ respectively.  For any $G\in \Bi(\Hi)$,
$$
\|AG B^{*}\|_{\Bi(\Hi) }=\|\hat{A}G\hat{B}^{*}\|_{\Bi(\Hi\otimes \Hi_{dil}^{(1)}, \Hi\otimes \Hi_{dil}^{(2)}  )}.
$$
where the norms can be infinite. 
\end{lemma}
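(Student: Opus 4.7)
The proof plan is built on the identity $\hat{A}^{*}\hat{A}=A^{*}A$, which for $\psi\in\textup{D}(A)=\textup{D}(\hat{A})$ immediately gives $\|\hat{A}\psi\|^{2}=\langle\psi,\hat{A}^{*}\hat{A}\psi\rangle=\|A\psi\|^{2}$. From this pointwise equality I would define a partial isometry $V_{A}\colon\Hi\rightarrow\Hi\otimes\Hi_{dil}^{(1)}$ by $V_{A}(A\psi)=\hat{A}\psi$ on $\textup{range}(A)$, extending by continuity to $\overline{\textup{range}(A)}$ and by zero to its orthogonal complement $\ker(A^{*})$. This yields the factorization $\hat{A}=V_{A}A$ on $\textup{D}(A)$, and $V_{A}^{*}V_{A}$ is the orthogonal projection onto $\overline{\textup{range}(A)}$. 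I would construct $V_{B}$ analogously. A standard computation for the adjoint of a bounded-times-closed composition gives the dual factorization $\hat{A}^{*}=A^{*}V_{A}^{*}$, and likewise $\hat{B}^{*}=B^{*}V_{B}^{*}$, with the appropriate domains.

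Next I would interpret both sides of the claimed norm identity through their sesquilinear forms
\[
F(\psi_{1};\psi_{2})=\langle A^{*}\psi_{1},GB^{*}\psi_{2}\rangle,\qquad \hat{F}(\hat\psi_{1};\hat\psi_{2})=\langle \hat{A}^{*}\hat\psi_{1},G\hat{B}^{*}\hat\psi_{2}\rangle,
\]
whose sup-norms over normalized inputs in $\textup{D}(A^{*})\times\textup{D}(B^{*})$ and $\textup{D}(\hat{A}^{*})\times\textup{D}(\hat{B}^{*})$ coincide with $\|AGB^{*}\|$ and $\|\hat{A}G\hat{B}^{*}\|$, with the convention that either norm may be $+\infty$. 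Substituting the dual factorizations from the previous paragraph yields the key relation
\[
\hat{F}(\hat\psi_{1};\hat\psi_{2})=F(V_{A}^{*}\hat\psi_{1};V_{B}^{*}\hat\psi_{2}),
\]
and since $V_{A}^{*},V_{B}^{*}$ have norm at most one, this immediately delivers the inequality $\|\hat{A}G\hat{B}^{*}\|\leq\|AGB^{*}\|$.

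For the reverse direction I would exploit that $F(\psi_{1};\psi_{2})$ vanishes whenever $\psi_{1}\in\ker(A^{*})=\overline{\textup{range}(A)}^{\perp}$ or $\psi_{2}\in\ker(B^{*})$, so the supremum defining $\|F\|$ is unchanged upon restriction to $\psi_{1}\in\overline{\textup{range}(A)}\cap\textup{D}(A^{*})$ and $\psi_{2}\in\overline{\textup{range}(B)}\cap\textup{D}(B^{*})$. For such $\psi_{i}$, the lifts $\hat\psi_{1}=V_{A}\psi_{1}$ and $\hat\psi_{2}=V_{B}\psi_{2}$ satisfy $\|\hat\psi_{i}\|=\|\psi_{i}\|$ and $V_{A}^{*}\hat\psi_{1}=\psi_{1}$, $V_{B}^{*}\hat\psi_{2}=\psi_{2}$ by the projection property of $V_{A}^{*}V_{A}$, and a short domain check places them in $\textup{D}(\hat{A}^{*})$ and $\textup{D}(\hat{B}^{*})$. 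Hence $F(\psi_{1};\psi_{2})=\hat{F}(\hat\psi_{1};\hat\psi_{2})$, yielding $\|F\|\leq\|\hat{F}\|$, and combining with the previous paragraph gives the claimed equality, including the case where both sides are infinite.

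The principal technical hurdle is keeping track of the unbounded adjoints: the factorization $\hat{A}^{*}=A^{*}V_{A}^{*}$ and the verification that the lifts $V_{A}\psi_{1}$ actually land in $\textup{D}(\hat{A}^{*})$ both require a careful unpacking of the Hilbert-space adjoint, though both follow from the standard fact that composition with a bounded operator commutes with taking adjoints on the natural domain. Once these domain matters are settled, the remainder of the argument is essentially algebraic.
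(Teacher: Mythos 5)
Your proof is correct. The paper states Lemma~\ref{AdjDil} without proof, evidently treating it as a routine fact, so there is no textual proof to compare against; your argument via the partial isometries $V_A,V_B$ (which is equivalent to comparing the polar decompositions $A=U_A|A|$ and $\hat A=\hat U_A|A|$, both with $|A|=(A^*A)^{1/2}=(\hat A^*\hat A)^{1/2}$) is the standard and natural route. The only point worth spelling out a bit more carefully is the restriction step in the reverse inequality: given $\psi_1\in\textup{D}(A^*)$ with $\|\psi_1\|\le 1$, its projection $P\psi_1$ onto $\overline{\textup{range}(A)}$ still lies in $\textup{D}(A^*)$ because $(I-P)\psi_1\in\ker(A^*)\subset\textup{D}(A^*)$ and $\textup{D}(A^*)$ is a subspace, with $A^*P\psi_1=A^*\psi_1$ and $\|P\psi_1\|\le\|\psi_1\|$; once this is said, the lift $\hat\psi_1=V_A\psi_1$ satisfies $V_A^*\hat\psi_1=P\psi_1\in\textup{D}(A^*)$, so $\hat\psi_1\in\textup{D}(\hat A^*)$ by the factorization $\hat A^*=A^*V_A^*$, and the norm equality follows exactly as you describe.
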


A linear map $\Psi: \Bi(\Hi)\rightarrow \Bi(\Hi)$ is said to be {\em dilation bounded} if
$$
 \| \Psi\otimes I_{\Hi_{aux}}\| =c <\infty
 $$
for an infinite-dimensional Hilbert space $\Hi_{aux}$.  Clearly for such a map, $ \| \Psi\otimes I_{\Hi_{aux}}\| \leq c$ when $\Hi_{aux}$ is finite-dimensional.   In particular, a bounded completely positive map is dilation bounded.

\begin{proposition}\label{SinHitUnBnd2}
Let $\Phi_{t}$ be a semigroup determined by~(\ref{BasicIntEqn}) where  $\Psi$ has a dilation bound $b$, and  $F_{t}$ forms a group.  Assume that $F_{t}$ is an isometry and  that $\Phi_{t}$ is an weak contraction with respect to the operator norm.    Let $\|\cdot \|_{\mathbf{Y} }$ be a generalized operator norm and $\hat{Y}_{1,j},\hat{Y}_{2,j}$ be dilations of $Y_{1,j}, Y_{2,j}$ with dilation spaces $\Hi_{1,dil}$, $\Hi_{2,dil}$.    Suppose also that:

\begin{enumerate}

\item
there exists some constant $a\in \R^{+}$ such that
\begin{align}\label{NotMainCond}
\| F_{t}(G)\|_{\mathbf{Y}}\leq e^{at}\|G\|_{\mathbf{Y}},
\end{align}

\item and defining the commutator $\Psi_{\hat{Y}, j}(G)=\hat{Y}_{1,j}\Psi(G)\hat{Y}_{2,j}^{*}-\Psi(\hat{Y}_{1,j}G \hat{Y}_{2,}^{*})$ for $j=1,\cdots n$, there exists a $c>0$ s.t. for all $j$ and any $G$
\begin{align}\label{MainCond}
\|\Psi_{\hat{Y}, j}(G)\|\leq c \|G\|_{\mathbf{Y}}.
\end{align}

\end{enumerate}
For $\alpha=a+nc$, we have the following inequality:
$$
\| \Phi_{t}(G)\|_{\mathbf{Y}}\leq e^{\alpha\, t}\|G\|_{\mathbf{Y} }.
$$

\end{proposition}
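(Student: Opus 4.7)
The plan is to iterate the defining integral equation~(\ref{BasicIntEqn}) and derive a Gronwall-type bound on the function $N(t) := \|\Phi_{t}(G)\|_{\mathbf{Y}}$. By Lemma~\ref{AdjDil}, each summand $\|Y_{1,j}GY_{2,j}^*\|$ in the definition of $\|\cdot\|_{\mathbf{Y}}$ equals $\|\hat{Y}_{1,j}G\hat{Y}_{2,j}^*\|$, which is precisely the form in which the commutator identity of hypothesis~2 applies directly. A first reduction is therefore just to replace $Y_{i,j}$ by $\hat{Y}_{i,j}$ throughout the computation, after which everything is phrased in terms of the dilated operators.

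Applying $\|\cdot\|_{\mathbf{Y}}$ to both sides of~(\ref{BasicIntEqn}) and invoking hypothesis~1 on each occurrence of $F_{t-s}$ gives the linear integral inequality
\begin{align*}
N(t) \leq e^{at}\|G\|_{\mathbf{Y}} + \int_{0}^{t} e^{a(t-s)}\,\|\Psi\Phi_{s}(G)\|_{\mathbf{Y}}\, ds.
\end{align*}
The heart of the argument is to bound $\|\Psi\Phi_s(G)\|_{\mathbf{Y}}$ by a constant multiple of $N(s)$. Split the weighted norm into $\|\Psi\Phi_s(G)\|$ and the sum $\sum_j \|\hat{Y}_{1,j}\Psi\Phi_s(G)\hat{Y}_{2,j}^*\|$. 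The first piece is controlled by the dilation bound of $\Psi$ together with the weak contraction of $\Phi_s$. For each $j$, we use the commutator identity from hypothesis~2,
\begin{align*}
\hat{Y}_{1,j}\Psi\Phi_s(G)\hat{Y}_{2,j}^{*} = \Psi\bigl(\hat{Y}_{1,j}\Phi_s(G)\hat{Y}_{2,j}^{*}\bigr) + \Psi_{\hat{Y},j}(\Phi_s(G)),
\end{align*}
interpreting the first right-hand term through the ampliation $\Psi \otimes I$ on the common dilation space (well-defined by dilation boundedness) and bounding it in the obvious way, while the second right-hand term is bounded directly by $c\,N(s)$. Summing over $j = 1, \ldots, n$, the commutator residues assemble into $ncN(s)$, and the $\Psi(\hat{Y}_{1,j}\Phi_s(G)\hat{Y}_{2,j}^{*})$ pieces reconstruct a multiple of the weighted norm of $\Phi_s(G)$, again proportional to $N(s)$.

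Once this yields an estimate $\|\Psi\Phi_s(G)\|_{\mathbf{Y}} \leq \kappa\,N(s)$ for an appropriate $\kappa$, substituting back into the integral inequality and setting $M(t) := e^{-at}N(t)$ reduces matters to the scalar Gronwall inequality $M(t) \leq \|G\|_{\mathbf{Y}} + \kappa \int_0^t M(s)\,ds$, whence $M(t) \leq e^{\kappa t}\|G\|_{\mathbf{Y}}$ and $N(t) \leq e^{(a+\kappa)t}\|G\|_{\mathbf{Y}}$.

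The main obstacle is to identify $\kappa$ precisely as $nc$ rather than $nc$ plus extra terms coming from the dilation bound of $\Psi$. The bookkeeping of the $n$ commutators cleanly produces the $nc$ piece (one $c$ for each pair $(\hat{Y}_{1,j},\hat{Y}_{2,j})$), but the $\Psi(\hat{Y}_{1,j}\Phi_s(G)\hat{Y}_{2,j}^*)$ terms naively contribute an extra growth depending on the dilation bound. To absorb them one needs to organize the Dyson expansion of $\Phi_t$ by comparison with the lifted semigroup $(\Phi_t \otimes I)(\hat{Y}_{1,j}G\hat{Y}_{2,j}^{*})$, exploiting that the weak contraction of $\Phi_t$ transfers to its dilation in the relevant sense so that only the commutator residues survive in the exponent. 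This is the delicate step, and the remaining work is the routine Gronwall calculus described above.
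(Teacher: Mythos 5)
Your proposal has the right ingredients (Lemma~\ref{AdjDil}, the commutator hypothesis~(\ref{MainCond}), a Gronwall argument) but organizes them in a way that does not yield the stated constant, and you acknowledge as much in your last paragraph without actually closing the gap. Concretely, after applying $\|\cdot\|_{\mathbf{Y}}$ directly to~(\ref{BasicIntEqn}) you try to prove $\|\Psi\Phi_s(G)\|_{\mathbf{Y}}\leq \kappa\,N(s)$; your own decomposition $\hat{Y}_{1,j}\Psi\Phi_s(G)\hat{Y}_{2,j}^* = \Psi(\hat{Y}_{1,j}\Phi_s(G)\hat{Y}_{2,j}^*) + \Psi_{\hat{Y},j}(\Phi_s(G))$ then forces you to bound the $\Psi(\hat{Y}_{1,j}\Phi_s(G)\hat{Y}_{2,j}^*)$ terms by the dilation bound $b$ of $\Psi$, giving at best $\kappa = (n+1)b + nc$ and hence $\alpha = a+(n+1)b+nc$, not $\alpha = a+nc$. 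The issue is structural: along this route the full map $\Psi$ must be applied to the dilated observable at every step, so $b$ necessarily enters the exponent.

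The paper avoids this entirely. Rather than estimating $\|\Psi\Phi_s(G)\|_{\mathbf{Y}}$, it starts from a Duhamel identity for the dilated quantity itself,
\begin{align*}
\hat{Y}_{1,j}\Phi_t(G)\hat{Y}_{2,j}^* \;=\; \Phi_t\big(\hat{Y}_{1,j}G\hat{Y}_{2,j}^*\big) \;+\; \int_0^t\! ds\, \Phi_{t-s}\Big[\big(F_{t-s}^*\,\Psi_{\hat{Y},j}\,F_{t-s}\big)\big(\Phi_s(G)\big)\Big],
\end{align*}
in which the \emph{only} occurrence of $\Psi$ outside a $\Phi$ is through the commutator $\Psi_{\hat{Y},j}$; the full $\Psi$ is absorbed into the contraction $\Phi_{t-s}$ (acting after dilation) and into the first term $\Phi_t(\hat{Y}_{1,j}G\hat{Y}_{2,j}^*)$, both of which contribute nothing to the Gronwall rate. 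Taking operator norms, using that $\Phi_{t-s}$ is a weak contraction and $F_{t-s}$ an isometry, and then invoking hypotheses~(\ref{NotMainCond}) and~(\ref{MainCond}), one lands exactly on $\|\Phi_t(G)\|_{\mathbf{Y}}\leq e^{at}\|G\|_{\mathbf{Y}}+nc\int_0^t e^{a(t-s)}\|\Phi_s(G)\|_{\mathbf{Y}}ds$, and Gronwall delivers $\alpha=a+nc$. Your closing remark about ``organizing the Dyson expansion of $\Phi_t$ by comparison with the lifted semigroup $(\Phi_t\otimes I)(\hat{Y}_{1,j}G\hat{Y}_{2,j}^*)$'' is precisely this idea, but as written it is a description of what must be done rather than a proof — the Duhamel identity above is the concrete step that is missing.
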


\begin{proof}
Recall the Dyson series identity
\begin{align}\label{Here}
\hat{Y}_{1,j}\Phi_{t}(G)\hat{Y}_{2,j}^{*}=\Phi_{t}(\hat{Y}_{1,j}G \hat{Y}_{2,j}^{*})+\int_{0}^{t}ds\,\Phi_{t-s}(F_{t-s}^{*}\Psi_{\hat{Y}_{1,j},\hat{Y}_{2,j}}F_{t-s})\Phi_{s}(G).
\end{align}
Taking the operator norm of both sides of~(\ref{Here}), and using the fact that $\Phi_{t}$ is contraction w.r.t. the $\infty$-norm and that $F_{t}$ is an isometry, we get the inequality
\begin{multline*}
\|\hat{Y}_{1,j}\Phi_{t}(G) (\hat{Y}_{2,j})^{*}\| \leq \| \hat{Y}_{1,j}F_{t}(G)(\hat{Y}_{2,j})^{*}\|+\int_{0}^{t}ds\, \|\varphi_{\hat{Y}_{1,j},\hat{Y}_{2,j}}F_{t-s}\Phi_{s}(G) \|
\\ \leq \|\hat{Y}_{1,j}F_{t}(G)\hat{Y}_{2,j}^{*}\| +c\int_{0}^{t}ds\,  e^{(t-s)a}\|\Phi_{s}(G)\|_{\mathbf{Y}}.
\end{multline*}
Applying Lemma~\ref{AdjDil} we get
\begin{align*}
\|Y_{1,j}\Phi_{t}(G)Y_{2,j}^{*}\| \leq \|Y_{1,j}F_{t}(G) Y_{2,j}^{*}\|  +c\int_{0}^{t}ds\,  e^{(t-s)a}\|\Phi_{s}(G)\|_{\mathbf{Y}}.
\end{align*}
 Now if we sum over $j$, add $\|\Phi_{t}(G)\|$ to the left and $\|G\|$ to the right (since $\Phi_{t}$ is a contraction), and apply~(\ref{NotMainCond}), then we arrive at
\begin{align*}
\|\Phi_{t}( G)\|_{\mathbf{Y}}\leq \| G \|_{\mathbf{Y}}e^{at} +nc\int_{0}^{t}ds\,  e^{a(t-s)}\|\Phi_{s}(G)\|_{\mathbf{Y}}.
\end{align*}
 Finally, applying Gronwall's inequality, we conclude
$$
\|\Phi_{t}(G)\|_{\mathbf{Y} }\leq e^{(a+nc)t}\|G\|_{\mathbf{Y}}.
$$

\end{proof}

\section{Bounding $\|\Phi_{t,\lambda}(G)\|_{wn}$ }\label{DysonSpec}

The goal of this section is to apply Proposition~\ref{SinHitUnBnd2} to $\Phi_{t,\lambda}$ and $\|\cdot \|_{wn}$ and to get a bound of $\|\Phi_{t,\lambda}(G)\|_{wn}$ independent of $\lambda$.   $\Phi_{t,\lambda}$ satisfies the integral equation    
$$\Phi_{t,\lambda}(G)=F_{t}(G)+\int_{0}^{t}ds\, F_{t-s}\Psi_{\lambda}\Phi_{s,\lambda}(G) \text{, where }$$$$\Psi_{\lambda}(G)=\frac{1}{\lambda}\big( \Tr_{2}[(I\otimes \rho)\Sca^{*}_{\lambda}(G\otimes I)\Sca_{\lambda}]-G\big) \text{ and } F_{t}(G)= e^{\frac{it}{2M}[\vec{P}^{2},\cdot]}(G).$$
  In our setup for an application of~(\ref{SinHitUnBnd2}), the weights $|\vec{X}|$ and $|\vec{P}|^{r}$ in our norm $\|\cdot\|_{wn}$ are dilated as $\vec{X}:L^{2}(\R^{n})\rightarrow L^{2}(\R^{n})\otimes \C^{n}$ and $\vec{P}^{\otimes^{r}}:L^{2}(\R^{n})\rightarrow L^{2}(\R^{n})\otimes \C^{nr}$ respectively, which are convenient for verifying condition~(\ref{MainCond}).    To check the condition~(\ref{MainCond}) we need  to bound commutation  maps    
\begin{itemize}
\item
$
\Psi_{\textbf{1},\lambda }(G)=\vec{P}^{\otimes^{r}}\Psi_{\lambda}(G)(\vec{P}^{\otimes^{s}})^{t}-\Psi_{\lambda}(\vec{P}^{\otimes^{r}}G(\vec{P}^{\otimes^s})^{t} ),
$
\item
$\Psi_{\textbf{2},\lambda}(G)=\vec{X}\Psi_{\lambda}(G)-\Psi_{\lambda}(\vec{X}G)$,
\item
$
\Psi_{\textbf{3},\lambda}^{(S)}(G)=(S\vec{P})^{t}\vec{X}\Psi_{\lambda}(G)-\Psi_{\lambda}((S\vec{P})^{t}\vec{X}G),
$
\end{itemize}
where $S\in M_{n}(\R)$ (and  we only really need the case $S=|i\rangle\langle j|$).   For $\Psi_{\textbf{j},\lambda}$, $j=2,3$ analogous maps are defined with transposed vectors of operators multiplying from the right however the analysis for these is the same.  

The proof of the following proposition will require notation and results from~\cite{Clark}.    In particular, we make use of the forms derived for $\Tr_{2}[(I\otimes \rho)\Asca_{\lambda}^{*}]$ and $\Tr_{2}[(I\otimes \rho)\Asca^{*}_{\lambda}(G\otimes I)\Asca]$ in~\cite[Prop. 2.2]{Clark} and~\cite[Prop. 2.3]{Clark} respectively for $\Asca_{\lambda}=\Sca_{\lambda}-I$.     Controlling the integrals of operators in these derived expressions requires  the use of~\cite[Prop. 3.1]{Clark} and~\cite[Prop. 3.3]{Clark}.  

Define the weighted trace norm
\begin{multline}
\|\rho \|_{wtn}=\|\rho\|_{1} +\sum_{\epsilon}\sum_{1\leq i,j \leq n}\| |\vec{P}|^{n-2+\epsilon }[X_{i},[X_{j},\rho]]\|_{1}\\+\sum_{\epsilon}\sum_{j=1}^{n}\||\vec{P}|^{n-2+\epsilon}X_{j}\rho X_{j}|\vec{P}|^{n-2+\epsilon}\|_{1}+\||\vec{P}|^{2(n-2)}\rho |\vec{P}|^{2(n-2)} \|_{1},
\end{multline}
where the sums in $\epsilon$ are over $\{0,1\}$ and $\{-1,0,1\}$ in the one- and three-dimensional cases respectively.  
  
\begin{proposition}\label{CorMain}
Let $\rho$ satisfy that $\|\rho\|_{wtn}<\infty$, then there exists  $\alpha\geq 0 $ such that for all $0<\lambda$, $G\in \Bi(L^{2}(\R^{n}))$, 
$$\|\Phi_{t,\lambda}(G)\|_{wn}\leq e^{\alpha\,t} \|G\|_{wn}.$$

\end{proposition}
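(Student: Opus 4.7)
The plan is to apply Proposition~\ref{SinHitUnBnd2} to the integral equation satisfied by $\Phi_{t,\lambda}$, with $F_{t}(G)=e^{it\vec{P}^{2}/(2M)}Ge^{-it\vec{P}^{2}/(2M)}$ and perturbation $\Psi=\Psi_{\lambda}$. The map $F_{t}$ is a spatial automorphism and therefore an isometry on $\Bi(L^{2}(\R^{n}))$, while $\Phi_{t,\lambda}$ is unital completely positive hence contractive in operator norm. The weights $|\vec{X}|$, $X_{i}P_{j}$, and powers $|\vec{P}|^{e}$ appearing in $\|\cdot\|_{wn}$ are dilated to closed operators valued in $L^{2}(\R^{n})\otimes \C^{m}$ so that, by Lemma~\ref{AdjDil}, $\|\cdot\|_{wn}$ is unchanged and has the form required by Proposition~\ref{SinHitUnBnd2}. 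For each fixed $\lambda$, $\Psi_{\lambda}$ is dilation bounded because $\mathcal{P}_{\lambda}$ is completely positive with $\mathcal{P}_{\lambda}(I)=\Tr(\rho)\cdot I$, so the hypotheses of the proposition are meaningful.

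For condition~(\ref{NotMainCond}), one uses the Heisenberg flow of the free particle: $F_{-t}(X_{i})=X_{i}-tP_{i}/M$ and $F_{-t}(P_{j})=P_{j}$. Rewriting each summand of the norm as $\|Y_{1,j}F_{t}(G)Y_{2,j}^{*}\|=\|F_{-t}(Y_{1,j})\,G\,F_{-t}(Y_{2,j})^{*}\|$, each shifted weight $F_{-t}(Y_{1,j})$ is a degree-one polynomial in $t$ whose $X,P$-coefficients are themselves weights already controlled by $\|G\|_{wn}$ (for instance the $t$-coefficient of $F_{-t}(X_{i}P_{j})$ is $-P_{i}P_{j}/M$, dominated by the $|\vec{P}|^{2}$ weight since $2\leq 3$). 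The triangle inequality then yields $\|F_{t}(G)\|_{wn}\leq(1+c_{1}t)\|G\|_{wn}$ for a constant depending only on $n$ and $M$, hence the exponential bound $e^{at}\|G\|_{wn}$.

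The bulk of the work is to verify the commutator estimate~(\ref{MainCond}) for the families $\Psi_{\textbf{1},\lambda}$, $\Psi_{\textbf{2},\lambda}$, $\Psi_{\textbf{3},\lambda}^{(S)}$ and their right-handed analogues with a constant $c$ uniform in $\lambda$. The approach is to write $\Sca_{\lambda}=I+\Asca_{\lambda}$ and decompose $\lambda\Psi_{\lambda}(G)$ into the three partial traces involving $\Asca_{\lambda}$, then substitute the integral representations from~\cite[Prop. 2.2, 2.3]{Clark}, which express each term as an integral over momentum transfers $\vec{k}$ of Weyl operators $\tau_{\vec{k}}=e^{i\vec{k}\vec{X}}$ composed with bounded operators built from $\rho$ and the scattering amplitudes. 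Commuting $\vec{P}$ past the integrand uses the intertwining $\tau_{\vec{k}}\vec{P}=(\vec{P}-\vec{k})\tau_{\vec{k}}$ and produces additional factors of $\vec{k}$ that are absorbed by higher momentum moments of $\rho$; commuting $\vec{X}$ past the momentum-dependent pieces produces $\vec{P}$-derivatives which, via integration by parts in $\vec{k}$, become $\vec{X}$-commutators acting on $\rho$. The resulting integrals are estimated using the bounds of~\cite[Prop. 3.1, 3.3]{Clark}, and the $\lambda^{-1}$ prefactor is compensated by the first-order vanishing of $\Asca_{\lambda}$ as $\lambda\to 0$, yielding constants that depend only on $\|\rho\|_{wtn}$ and not on $\lambda$.

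The main obstacle is the third-order case $r+s=3$ of $\Psi_{\textbf{1},\lambda}$, where three factors of $\vec{P}$ must be commuted simultaneously through the scattering integrals, producing cross terms that require control of momentum moments of $\rho$ up to the highest order present in $\|\rho\|_{wtn}$, combined with the mixed $X,P$ moments demanded by the $\Psi_{\textbf{3},\lambda}^{(S)}$ family. This is precisely why the weighted trace norm is defined to include the terms $\||\vec{P}|^{n-2+\epsilon}[X_{i},[X_{j},\rho]]\|_{1}$, $\||\vec{P}|^{n-2+\epsilon}X_{j}\rho X_{j}|\vec{P}|^{n-2+\epsilon}\|_{1}$, and $\||\vec{P}|^{2(n-2)}\rho|\vec{P}|^{2(n-2)}\|_{1}$: they arise exactly as the coefficients of the position- and momentum-commutators generated by these manipulations. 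Once $c$ is extracted uniformly in $\lambda$, Proposition~\ref{SinHitUnBnd2} delivers $\|\Phi_{t,\lambda}(G)\|_{wn}\leq e^{\alpha t}\|G\|_{wn}$ with $\alpha=a+nc$.
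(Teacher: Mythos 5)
Your proposal follows essentially the same route as the paper: invoke Proposition~\ref{SinHitUnBnd2} with $F_t$ the free kinetic flow and $\Psi=\Psi_\lambda$, establish contractivity of $\Phi_{t,\lambda}$ from unitality and complete positivity, verify condition~(\ref{NotMainCond}) from the explicit Heisenberg flow of $\vec{X},\vec{P}$, and reduce condition~(\ref{MainCond}) to the integral representations of~\cite[Prop. 2.2, 2.3]{Clark} controlled by~\cite[Prop. 3.1, 3.3]{Clark} with a constant uniform in $\lambda$. The paper's own verification of~(\ref{MainCond}) is carried out in detail only for $\Psi_{\mathbf{1},\lambda}$ with $s=0$ (splitting it as $\Psi_{L,\lambda}+\Psi_{C,\lambda}$ and then treating several sub-cases explicitly), whereas your sketch describes the mechanism more qualitatively, but the architecture and the key lemmas invoked are the same.
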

The norm $\|\cdot\|_{wtn}$ is somewhat stronger than what is required for the proof, but it does not help us to improve it since $\|\rho\|_{wtn}$ is used in our application of the result from~\cite{Clark}.  

\begin{proof}
By the construction~(\ref{PoissonInt}), the maps are $\Phi_{t,\lambda}$ completely positive.   Also from~(\ref{PoissonInt}) we see that $\Phi_{t,\lambda}(I)=I$.   Hence the maps $\Phi_{t,\lambda}$ are contractive since $\|\Phi_{t,\lambda}\|=\Phi_{t,\lambda}(I)\|=\|I\|=1$. 

Since $F_{t}$ leaves $\vec{P}$ invariant and $F_{t}(X_{j})=X_{j}+tP_{j}$, using the triangle inequality and that $1+t\leq e^{t}$, it follows that there exists an $a>0$ s.t. $\|F_{t}(G)\|_{wn}\leq e^{at}\|G\|_{wn}$.   Thus condition (1) of Proposition~\ref{SinHitUnBnd2} holds.  Checking the second condition (2) of Proposition~\ref{SinHitUnBnd2} requires many calculations where the basic strategy and techniques are similar in the different cases, so we focus on $\Psi_{\textbf{1},\lambda }(G)$ for $s=0$.
   
 $\Psi_{\lambda,\mathbf{1} }(G)$ can be written as $\Psi_{\mathbf{1},\lambda}(G)=\Psi_{L,\lambda }(G)+\Psi_{C,\lambda}(G),$ where  
 $$\Psi_{L,\lambda}(G)=\big(\vec{P}^{\otimes^{r}}\Tr_{2}[(I\otimes \rho)\Asca_{\lambda}^{*}]-\Tr_{2}[(I\otimes \rho) \Asca_{\lambda}^{*}]\vec{P}^{\otimes^{r} }\big)G \text{, and}  $$
$$\Psi_{C,\lambda}(G)=\vec{P}^{\otimes^{r}}\Tr_{2}[(I\otimes \rho)\Asca_{\lambda}^{*}(G\otimes I) \Asca_{\lambda}]-\Tr_{2}[(I\otimes \rho) \Asca_{\lambda}^{*}(\vec{P}^{\otimes^{r}}G\otimes I) \Asca_{\lambda}]. $$
The above terms $\Psi_{L,\lambda}(G)$ and  $\Psi_{C,\lambda}(G)$ are naturally handled individually.   Using~\cite[Prop. 2.2]{Clark} and~\cite[Prop. 2.3]{Clark} the $\Psi_{L,\lambda}(G)$ and $\Psi_{C,\lambda}(G)$ terms can be written as
\begin{multline*}
\Psi_{L, \lambda}G=\frac{1}{\lambda}\int_{\R^{n}}d\vec{k} \int_{ SO_{n}}d\sigma \,\tau_{\vec{k}}\, \tau_{\sigma \vec{k}}^{*}\, \rho(\ato,  \yto)\\ \scat(|\vec{k}|)\, [ (\vec{P}-(\sigma-I) \vec{k})^{\otimes^{r}}-\vec{P}^{\otimes^{r}}]\, G
\end{multline*}
and
\begin{multline*}
\Psi_{C, \lambda}( G)=\sum_{j} \frac{1}{\lambda}\int_{\R^{n}}d\vec{k} \int_{SO_{n}\times SO_{n}}d\sigma_{1}\,d\sigma_{2}\,U^{*}_{\vec{k}, \sigma_{1},\lambda }\,   \mtoj \\  \hspace{.5cm}  \bar{\scat}_{\lambda}(|\kto | ) \,  [\wtt^{\otimes^{r}} -\vec{P}^{\otimes^{r}}]\, G\, \scat_{\lambda}(| \ktt |)\, \mttj \, U_{\vec{k}, \sigma_{2}\lambda }.
\end{multline*}
respectively, where $\wt=\frac{1+\lambda\sigma}{1+\lambda }\vec{P}-\frac{\sigma-I}{1+\lambda }\vec{k}$.

Starting with the $\Psi_{L,\lambda}(G)$ term, $(\vec{P}-(\sigma-I)\vec{k} )^{\otimes^{r}}-\vec{P}^{\otimes^{r}}$ can be represented as a sum of operators $X_{\epsilon_{1}}\otimes \cdots \otimes X_{\epsilon_{r}}$, where $X_{1}=\vec{k}-\sigma \vec{k}$ and $X_{2}=\vec{P}$, and at a least a single $\epsilon_{j}$ is $1$.
\begin{multline}\label{Yore}
\|\Psi_{L,\lambda}G\|  \leq \sum_{\epsilon_{1},\cdots \epsilon_{r}}\|\int_{\R^{n}}d\vec{k}\int_{ SO_{n}}d\sigma\, \tau_{\vec{k}} \, \tau_{\sigma \vec{k}}^{*}\, \rho(\ato, \yto))\\  \frac{1}{\lambda}\scat(|\vec{k}|)\, X_{\epsilon_{1}}\otimes \cdots X_{\epsilon_{m}}\, G \|
\end{multline}
For each  $X_{\epsilon_{1}}\cdots X_{\epsilon_{r}}$  we can unitarily rearrange all $X_{2}=\vec{P}$ terms to the front of the tensor product since this is merely equivalent to a formal rearrangement of the tensors in the product $L^{2}(\R^{n})\otimes (\C^{n})^{\otimes^{r}}$.  Focusing on a single product with $p<r$ components of $\vec{P}$, it can be rearranged:
\begin{multline}
\|\int_{\R^{n}}d\vec{k}\int_{ SO_{n}}d\sigma \,\tau_{\vec{k}}\, \tau_{\sigma \vec{k}}^{*}\, \frac{\big(\ato-(\att+\sigma \vec{k}-\vec{k})\big)^{\otimes^{r-p}} }{(\delta_{n,3}+|\ato|^{n-2})^{-1} }\, \rho(\ato,\att+(\sigma-I)\vec{k} ) \\ E_{1}(\vec{P},\vec{k},1,\lambda )\, \vec{P}^{\otimes^{p}}(I+|\vec{P}|)\,G \|,
\end{multline}
where we have rewritten $-\sigma \vec{k}+\vec{k}=\ato-(\att+\sigma \vec{k}-\vec{k})$ to match the matrix entries of $\rho$.  Now to set things up for an application of~\cite[Prop. 3.1]{Clark}:
\begin{eqnarray*}
\eta &=& (\delta_{n,3}+|\vec{P}|)^{n-2}\, | [\vec{P} ,\cdot]^{r-p}(\rho)|,
\\
n_{\vec{k},\sigma_{1}}&=&E_{1}(\vec{P},\vec{k},1,\lambda )\, \frac{[\vec{P} ,\cdot]^{r-p}(\rho)}{| [\vec{P} ,\cdot]^{r-p}(\rho)|},
\\
q_{\vec{k},\sigma}&=&n_{\vec{k},\sigma_{1}}\, \eta(\ato ,\yto ),
\end{eqnarray*}
where $n_{\vec{k},\sigma_{1}}$ is treated as a map $n_{\vec{k},\sigma_{1},\lambda}: L^{2}(\R^{n})\otimes \C^{np}\rightarrow  L^{2}(\R^{n})\otimes \C^{nr}$ acting on $ \vec{P}^{\otimes^{p}}(1+|\vec{P}|)G$.  Hence~(\ref{Yore}) is bounded by a constant multiple of
$$\| \eta \|_{1}\| |\vec{P}|^{p}(I+|\vec{P}|)G\|.$$

  For $\Phi_{C,\lambda}(G)$, we need to pick a way of rewriting the expression $ (\wto)^{\otimes^{r}} -\vec{P}^{\otimes^{r}}$.  Using the relation $\wt=\vec{P}-\lambda \frac{(\sigma-I)^{2}}{\sigma+\lambda}\vec{P}-\frac{(1+\lambda)(\sigma-I)}{\sigma+\lambda}\vt$, we can write
\begin{align*}
(\wto)^{\otimes^{r}} -\vec{P}^{\otimes^{r}}=\sum_{(\epsilon_{1},\dots \epsilon_{r})\neq (1,\dots,1)}X_{\epsilon_{1}}\otimes \cdots X_{\epsilon_{r}},
\end{align*}
where $X_{1}=\vec{P}$, $X_{2}=-\lambda \frac{(\sigma-I)^{2} }{\sigma+\lambda}\vec{P}$, $X_{3}=\frac{(1+\lambda)(\sigma-I)}{\sigma+\lambda}\vto $, and the sum is over all cases where not all operators in the product are $\vec{P}$.
\begin{multline*}
\|\varphi_{C,\lambda}(G )\| \leq \sum_{(\epsilon_{1},\dots \epsilon_{r})} \|\sum_{j} \frac{1}{\lambda}\int_{\R^{n}}d\vec{k} \int_{SO_{n}\times SO_{n}}d\sigma_{1}\,d\sigma_{2}\, U^{*}_{\vec{k},\sigma_{1},\lambda} \,   \mtoj^{*}\\  \bar{\scat}_{\lambda}(| \kto |)\,  [ X_{\epsilon_{1}}\otimes \cdots X_{\epsilon_{r}} ]\, G\, \scat_{\lambda}(|\kto |)\, \mttj \,U_{\vec{k},\sigma_{2},\lambda}\|.
\end{multline*}
We break the analysis into two cases: $\epsilon_{i}=3$ for some $i$ and $\epsilon_{i}\neq 3$ for all $i$.

For the case in which  $\epsilon_{i}=3$ for some $i$, it follows that the net power of $\vec{P}$ in the product $X_{\epsilon_{1}}\otimes \cdots \otimes X_{\epsilon_{m}}$ is strictly less than $r$.  Formally rearranging the order of the tensor product $L^{2}(\R^{n})\otimes (\C^{n})^{\otimes r}$ so that the $X_{1}$ components come first, $X_{2}$ components second, and $X_{3}$ components last:   $X_{3}^{r-m_{1}-m_{2}}X_{2}^{m_{2}}X_{1}^{m_{1}}$, $r-m_{1}-m_{2}>0$.
Define the operator  $B_{m_{2},m_{1},\lambda} \in \mathcal{B}(L^{2}(\R^{n})\otimes (\C^{n})^{\otimes^{m_{1}+m_{2}}})$, which operates on a simple tensor product $f(z)\otimes v_{1}\otimes \cdots v_{m}$ as
\begin{align*}
 B_{m_{2},m_{1},\lambda}f(z)\otimes v_{1}\otimes \cdots v_{m} =  f(z)\otimes  \frac{(\sigma-I)^{2} }{\sigma+\lambda }v_{1}\otimes \cdots   \frac{(\sigma-I)^{2}}{\sigma+\lambda } v_{m_{2}} \otimes v_{m_{2}+1}\otimes \cdots v_{m_{1}+m_{2}}.
\end{align*}
We can rearrange our expression as:
\begin{multline*}
\|\sum_{j} (-\lambda)^{m_{2}}\int_{\R^{n}}d\vec{k} \int_{SO_{n}\times SO_{n}}d\sigma_{1}\,d\sigma_{2}\, U^{*}_{\vec{k},\sigma_{1},\lambda}\,  (\delta_{n,3}+|\vto|^{n-2})\, X_{3}^{\otimes^{r-m_{1}-m_{2}} } \, \mtoj ^{*}\\ E_{2}(\vec{P},\vec{k},\sigma_{1},1,\lambda) \,  B_{m_{2},m_{1},\lambda}\,  \vec{P}^{m_{1}+m_{2}}\,(1+|\vec{P})\, G\, \scat_{\lambda}(| \ktt | ) \, \mttj \,U_{\vec{k},\sigma_{2},\lambda }\|.
\end{multline*}
Hence we can apply~\cite[Prop. 3.3]{Clark} as
\begin{eqnarray*}
\eta^{(1)}_{j}(\vec{k})&=& (\delta_{n,3}+|\vec{k}|^{n-2})\,  |\vec{k}|^{r-m_{1}-m_{2}}\,  f_{j}(\vec{k}),
\\
n_{j,\vec{k},\sigma_{1}}^{(1)}( \vec{P})&=&  E_{1}(\vec{P},\vec{k},\sigma_{1},1,\lambda)\,   \frac{\vec{k}^{\otimes^{r-m_{1}-m_{2}}}}{|\vec{k}|^{r-m_{1}-m_{2}}} \, B_{m_{2},m_{1},\lambda} \mtoj,
\\
h_{j,\vec{k},\sigma_{1}}(\vec{P})&=&n_{j,\vec{k},\sigma_{1}}^{(1)}\,\eta^{(1)}_{j}(\vto ),
\\
\eta^{(2)}_{j}(\vec{k})&=& f_{j}(\vec{k}),
\\
n_{j,\vec{k},\sigma_{2}}^{(2)}(\vec{P})&=&  \scat_{\lambda}(| \ktt | )\,  \mttj,
\\
g_{j,\vec{k},\sigma_{2}}(\vec{P})&=&n_{j,\vec{k},\sigma_{2}}^{(2)}\,\eta^{(2)}_{j}(\vtt),
\\
\end{eqnarray*}
where the integral is acting on $   \vec{P}^{\otimes^{m_{1}} }(I+|\vec{P}|)G$.   Hence this expression is bounded by some constant times
$$\| (\delta_{n,3}+|\vec{P}|^{n-2})|\vec{P}|^{r-m_{1}-m_{2}}\rho |\vec{P}|^{r-m_{1}-m_{2}}(\delta_{n,3}+|\vec{P}|^{n-2})\|_{1} \|  |\vec{P}|^{m_{1} }(I+|\vec{P}|)G\|.$$

This concludes the case when at least one of the $\epsilon_{j}$ is $3$.  Now we can move on to the somewhat simpler case when $\epsilon_{j}\neq 3$ for all $j$.  Since at least one of the $\epsilon_{j}$ is   $1$, we have a non-zero power  $\lambda^{m_{2}}$ coming from the $X_{2}=-\lambda \frac{(\sigma-I)^{2}}{\sigma+\lambda}\vec{P}$ terms.  The factor $\frac{1}{\lambda}$ in front of the expression is then cancelled, and it has the form

\begin{multline*}
\|\sum_{j} \lambda^{m_{2}-1}\int_{\R^{n}}d\vec{k}  \int_{SO_{n}\times SO_{n}}d\sigma_{1}\,d\sigma_{2}\,U^{*}_{\vec{k},\sigma_{1},\lambda}\,    \mtoj^{*}\bar{\scat}_{\lambda}(|\kto |)\\     B_{m_{2},m_{1},\lambda} \,  \vec{P}^{\otimes^{r}}\,G\, \scat_{\lambda}(|\ktt | )\,\mttj \,U_{\vec{k},\sigma_{2},\lambda }\|.
\end{multline*}

In this case we can apply~\cite[Prop. 3.3]{Clark}  as
\begin{eqnarray*}
\eta^{(1)}_{j}(\vec{k})&=& f_{j}(\vec{k}),
\\
n_{j,\vec{k},\sigma_{1}}^{(1)}(\vec{P})&=&  \lambda^{m_{2}-1}\, \mtoj \, \bar{\scat}_{\lambda}(| \kto|)\, B_{m_{2},m_{1},\lambda},
\\
h_{j,\vec{k},\sigma_{1}}(\vec{P})&=&n_{j,\vec{k},\sigma_{1}}^{(1)}(\vec{P})\,\eta^{(1)}_{j}(\vto),
\\
\eta^{(2)}_{j}(\vec{k})&=& f_{j}(\vec{k}),
\\
n_{j,\vec{k},\sigma_{2}}^{(2)}(\vec{P})&=&   \mtoj \,\scat_{\lambda}(| \kto|),
\\
g_{j,\vec{k},\sigma_{2}}(\vec{P})&=&n_{j,\vec{k},\sigma_{2}}^{(2)}(\vec{P})\,\eta^{(2)}_{j}(\vtt),
\\
\end{eqnarray*}
where the integral of operators is acting on $\vec{P}^{\otimes^{r}}G$, and we get that the above expression is bounded by some constant multiplied by $\|\rho\|_{1}\| \vec{P}^{\otimes^{r}}G\|=\| |\vec{P}|^{r}G\|$.

\end{proof}

\section{The Main Theorem}\label{SectMain}

Before we get to proving the main result we need to state two inequalities similar to~(\ref{Previous}), except that the errors is of first order rather than second order.   Their proofs are simplified versions of the proof of~(\ref{Previous}) so we omit them.

Define the weighted operator norms
\begin{itemize}

\item $ \|G\|_{\mathbf{a}}= \|G\|+\| |\vec{P}|G\|+\|G |\vec{P}|\|$,

\item

$
\| G  \|_{\mathbf{b}}=\|G\|+\| |\vec{P}|^{2}G\|+\|G|\vec{P}|\|,
$

\end{itemize}
and the weighted trace norms
\begin{itemize}

\item
$
\|\rho \|_{\mathbf{a},1}=\|\rho\|_{1}+\| |\vec{X}|\rho |\vec{X}|\|_{1}+\sum_{\epsilon}\sum_{j}\||\vec{P}|^{n-2+\epsilon}[X_{j},\rho] \|_{1}+\| |\vec{P}|^{n-2} \rho |\vec{P}|^{n-2}\|_{1},
$
\item
$\|\rho \|_{\mathbf{b},1}=\|\rho \|_{1}+\sum_{\epsilon}\sum_{i,j}\| |\vec{P}|^{n-2}[P_{i},[X_{j},\rho ]]\|_{1}+ \| |\vec{P}|^{n-1} \rho |\vec{P}|^{n-1}) \|_{1},$\\

\end{itemize}
where the sums in $\epsilon$ are over $\{0\}$ for dimension one and $\{-1,0\}$ for dimension three.   Note that if we represent the integral kernel of $\rho$ in the momentum basis, then $\|\vec{p}|^{r}\rho |\vec{p}|^{r}\|_{1}=\int_{\R^{n}} d\vec{k}\, |\vec{k}|^{2r}\rho(\vec{k},\vec{k})$ and $[X_{j},\rho](\vec{k}_{1},\vec{k}_{2})=\frac{\partial}{\partial h}\rho(\vec{k}_{1}+h,\vec{k}_{2}+h)|_{h=0}$.   Hence the norms $\|\rho\|_{\mathbf{a}}$ and $\|\rho\|_{\mathbf{b}}$ are like weighted Sobolev $1$-norms on $\rho$.    

Define the maps $\Psi_{\lambda}^{\prime}$ and $\Psi_{\lambda}^{\prime,\diamond}$ on $\Bi(L^{2}(\R^{n} ))$
 \begin{itemize}
\item $
\Psi_{\lambda}^{\prime} (G)=\vec{P}\Psi_{\lambda}(G)-\Psi_{\lambda}(\vec{P}G),
$
\item $ \Psi_{\lambda}^{\prime, \diamond}(G)= \vec{P}\Psi_{\lambda}^{\diamond}(G)-\Psi_{\lambda}^{\diamond}(\vec{P}G).$
\end{itemize}

\begin{lemma}\label{HardLem13}  Let $\Phi_{t,\lambda}$ be defined by

There exists a $c>0$ s.t. for all  $\rho$, $G$,and   $0\leq \lambda$ such that
$$
\|\Psi_{\lambda}(G)-\Psi_{\lambda}^{\diamond} (G)\|\leq c\lambda\|\rho \|_{\mathbf{a} ,1} \|G\|_{\mathbf{a}} \text{ and }
\|\Psi_{\lambda}^{\prime}(G)-\Psi_{ \lambda}^{\prime, \diamond}(G)\|\leq c\lambda\|\rho \|_{\mathbf{b} ,1} \|G\|_{\mathbf{b}}.
$$
\end{lemma}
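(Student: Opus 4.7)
The plan is to imitate the proof of~(\ref{Previous}) but to truncate the Taylor expansion one order earlier, accepting worse $\lambda$-behavior in exchange for weaker weighted norms on $G$ and $\rho$. Writing $\Sca_\lambda = I + \Asca_\lambda$, I split
$$\Psi_\lambda(G) = \frac{1}{\lambda}\Tr_{2}\bigl[(I\otimes\rho)\bigl(\Asca_\lambda^{*}(G\otimes I) + (G\otimes I)\Asca_\lambda\bigr)\bigr] + \frac{1}{\lambda}\Tr_{2}\bigl[(I\otimes\rho)\Asca_\lambda^{*}(G\otimes I)\Asca_\lambda\bigr]$$
into a ``linear'' piece $\Psi_{L,\lambda}(G)$ and a ``conjugation'' piece $\Psi_{C,\lambda}(G)$, and then apply Propositions~2.2 and 2.3 of \cite{Clark} to rewrite each as an integral over $\vec k\in\R^{n}$ and $\sigma\in SO_{n}$ (respectively $SO_{n}\times SO_{n}$) of conjugations by the Weyl-type operators $U_{\vec k,\sigma,\lambda}$, with integrands built from the matrix elements $\rho(\ato,\yto)$, the scattering amplitude $\scat_{\lambda}(|\vec k|)$, and the shifted momentum operators $\vt$, $\wt$.

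Next, Taylor-expand every $\lambda$-dependent factor in the integrands (the arguments of $\rho$, $\scat_\lambda$, $U_{\vec k,\sigma,\lambda}$, $\vt$, $\wt$) around $\lambda=0$, retaining only the constant-in-$\lambda$ term with an explicit $O(\lambda)$ remainder. By the definitions in Appendix~\ref{Form}, the constant-in-$\lambda$ portion of $\Psi_{L,\lambda}$ reassembles into $i[V_{1}+\lambda V_{2}+\tfrac{\lambda}{2}\{\vec P,\vec A\}+\tfrac{\lambda^{2}}{2}M\vec A^{2},G]$ and that of $\Psi_{C,\lambda}$ into $\lambda\bigl(\varphi(G)-\tfrac{1}{2}\{\varphi(I),G\}\bigr)$, whose sum is precisely $\Psi_\lambda^{\diamond}(G)$, just as in the proof of~(\ref{Previous}). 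The remainders are then controlled in operator norm by applying Propositions~3.1 and 3.3 of \cite{Clark} with appropriate choices of $\eta$, $n_{\vec k,\sigma}$, and $q_{\vec k,\sigma}$; because the first-order Taylor remainder carries only one additional factor of $|\vec k|$ or $|\vec P|$ (rather than two as in the proof of~(\ref{Previous})), the resulting operator-norm bound is a constant multiple of $\lambda\|\rho\|_{\mathbf a,1}\|G\|_{\mathbf a}$, with the weighted trace norm $\|\rho\|_{\mathbf a,1}$ packaging precisely the $\vec P$- and $\vec X$-weighted norms of $\rho$ required to dominate the integrands.

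The bound for $\Psi_\lambda^{\prime}-\Psi_\lambda^{\prime,\diamond}$ follows by the same scheme after pushing a factor of $\vec P$ through the integrals, in direct analogy with the treatment of $\Psi_{\mathbf{1},\lambda}$ in the proof of Proposition~\ref{CorMain}: the commutation produces one additional $(\sigma-I)\vec k$-factor in the integrand and one additional $\vec P$-factor acting on $G$, which are exactly what the strengthened norms $\|\rho\|_{\mathbf b,1}$ and $\|G\|_{\mathbf b}$ are designed to absorb. The main obstacle will be the algebraic bookkeeping needed to verify that the zeroth-order Taylor portion genuinely reassembles into $\Psi_\lambda^{\diamond}$ in light of the explicit formulas in Appendix~\ref{Form}, and to organize the first-order remainders into the form $\int d\vec k\,d\sigma\, U_{\vec k,\sigma,\lambda}^{*}q_{\vec k,\sigma,\lambda}(\vec P)\,G\,(\cdots)U_{\vec k,\sigma,\lambda}$ required by Clark's Propositions~3.1 and 3.3; once this identification is in hand, the operator-norm estimates themselves are routine.
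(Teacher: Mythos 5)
The paper offers no proof of this lemma: it states only that the ``proofs are simplified versions of the proof of~(\ref{Previous}) so we omit them,'' and your proposal is exactly the kind of simplified argument being alluded to. Your plan --- decompose $\Psi_{\lambda}$ into linear and conjugation pieces, express each as a $(\vec k,\sigma)$-integral via~\cite[Prop.~2.2]{Clark} and~\cite[Prop.~2.3]{Clark}, Taylor-expand the $\lambda$-dependent factors and identify the low-order assembly with $\Psi_{\lambda}^{\diamond}$, then dominate the remainders with~\cite[Prop.~3.1]{Clark} and~\cite[Prop.~3.3]{Clark} --- is the same machinery used in Proposition~\ref{CorMain} and in the proof of~(\ref{Previous}), with one fewer order of Taylor remainder traded against the weaker norms $\|\rho\|_{\mathbf{a},1}$, $\|G\|_{\mathbf{a}}$ (respectively $\|\rho\|_{\mathbf{b},1}$, $\|G\|_{\mathbf{b}}$ after pushing $\vec P$ through, as you note). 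One bookkeeping caution: in the paper's convention $\Psi_{\lambda}^{\diamond}$ does \emph{not} include the $\tfrac{\lambda^{2}M}{2}\vec A^{2}$ commutator --- that term is added and subtracted separately around~(\ref{PreError}) --- so it should not be folded into your reassembled target; it is bounded and $O(\lambda^{2})$, hence harmless for this lemma's estimate, but keeping it out is needed to match the way Theorem~\ref{Main} invokes the lemma.
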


\begin{theorem}[Main Theorem]\label{Main}
Let  $\rho$ satisfy that $\|\rho\|_{wtn},\|\rho\|_{\mathbf{b},1}<\infty$.   Then over any interval $[0,T]$ there exists a constant $\kappa_{T}$ s.t. for all $t\in [0,T]$, $0\leq \lambda $, and  $G\in \Bi(L^{2}(\R^{n}))$,
$$
\|\Phi_{t,\lambda}(G)-\Phi_{t,\lambda}^{\diamond}(G)\|\leq (t\wedge 1)\kappa_{T}\lambda^{2} \|G\|_{wn}.
$$

\end{theorem}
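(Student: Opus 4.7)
The plan is to exploit the bound~(\ref{Previous}) to show that $\Phi_{t,\lambda}$ satisfies the integral equation defining $\Phi_{t,\lambda}^{\diamond}$ in~(\ref{DysonIntro2}) up to an error of order $\lambda^2$, and then apply the elementary Gronwall-type inequality of Proposition~\ref{Pert}. The key feature of~(\ref{DysonIntro2}) is that it uses the group $F_t^{\diamond}$ together with a \emph{bounded} completely positive perturbation $\lambda(\varphi(\cdot)-\frac{1}{2}\{\varphi(I),\cdot\})$, which is precisely what Proposition~\ref{Pert} requires. Concretely, once we establish
$$\|E_t'(\Phi_{\cdot,\lambda}(G))\|\leq c(T)(t\wedge 1)\lambda^2\|G\|_{wn},\qquad t\in[0,T],$$
the fact that $E_t'(\Phi_{\cdot,\lambda}^{\diamond}(G))=0$ by construction together with Proposition~\ref{Pert} yields~(\ref{MainEqu}) after absorbing the Gronwall factor $e^{\lambda T\|\varphi-\frac{1}{2}\{\varphi(I),\cdot\}\|}$ into $\kappa_T$ for $\lambda$ in any fixed compact set.

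The direct estimate~(\ref{Previous}) is not a statement about $E_t'$ but about $E_t$ from~(\ref{DysonIntro}), so I would first bound $E_t$. Subtracting the integral equation $\Phi_{t,\lambda}(G)=F_t(G)+\int_0^t F_{t-s}\Psi_\lambda(\Phi_{s,\lambda}(G))\,ds$ (see~(\ref{PoissonInt})) from the definition of $E_t$ gives
$$E_t(\Phi_{\cdot,\lambda}(G))=\int_0^t F_{t-s}\bigl[\mathcal{L}_\lambda^{\diamond}(\Phi_{s,\lambda}(G))-\Psi_\lambda(\Phi_{s,\lambda}(G))\bigr]\,ds,$$
where $\mathcal{L}_\lambda^{\diamond}$ denotes the first-order-in-$\lambda$ generator appearing inside~(\ref{DysonIntro}). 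Since $F_{t-s}$ is an isometry,~(\ref{Previous}) bounds the integrand in operator norm by $c\lambda^2\|\Phi_{s,\lambda}(G)\|_{wn}$, and Proposition~\ref{CorMain} bounds this in turn by $ce^{\alpha s}\lambda^2\|G\|_{wn}$ uniformly in $\lambda$. Integrating over $s\in[0,t]\subset[0,T]$ and using $\int_0^t e^{\alpha s}\,ds\leq(t\wedge 1)C_T$ delivers $\|E_t(\Phi_{\cdot,\lambda}(G))\|\leq c'(T)(t\wedge 1)\lambda^2\|G\|_{wn}$.

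The conversion from the bound on $E_t$ to a bound on $E_t'$ is the role of Proposition~\ref{PertProp2}. The difference $E_t-E_t'$ arises from splitting the terms $i[\tfrac{\lambda}{2}\{\vec{P},\vec{A}\}+\tfrac{\lambda^2}{2}M\vec{A}^2,\cdot]$ between generator and perturbation in the two Dyson expansions; rewriting one into the other via Dyson iteration introduces products of unbounded $\vec{P}$ factors with $\Phi_{s,\lambda}(G)$, which are exactly the quantities controlled uniformly in $\lambda$ by the $\|\cdot\|_{wn}$-bound of Proposition~\ref{CorMain}. Closing the argument at the correct order in $\lambda$ additionally requires the first-order commutator bound $\|\Psi_\lambda^{\prime}(G)-\Psi_\lambda^{\prime,\diamond}(G)\|\leq c\lambda\|\rho\|_{\mathbf{b},1}\|G\|_{\mathbf{b}}$ of Lemma~\ref{HardLem13}, which handles the extra $\vec{P}$-commutator produced when one commutes $\vec{P}$ past $\Psi_\lambda$ in the iteration.

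The main obstacle is the bookkeeping in this $E_t\leftrightarrow E_t'$ conversion: each unbounded $\vec{P}$ factor generated by the Dyson rewrite must be accounted for by an application of either~(\ref{Previous}), Lemma~\ref{HardLem13}, or the $\|\cdot\|_{wn}$-component of Proposition~\ref{CorMain} at precisely the right step, so that the overall estimate remains second-order in $\lambda$. Once the bound on $E_t'$ is in place, Proposition~\ref{Pert} closes the argument for $\lambda$ in any fixed compact interval. For $\lambda$ outside such an interval the theorem is automatic: both $\Phi_{t,\lambda}$ and $\Phi_{t,\lambda}^{\diamond}$ are unital completely positive and therefore operator-norm contractive, so $\|\Phi_{t,\lambda}(G)-\Phi_{t,\lambda}^{\diamond}(G)\|\leq 2\|G\|\leq 2\|G\|_{wn}$, which is dominated by $\lambda^2(t\wedge 1)\kappa_T$ once $\lambda$ is sufficiently large.
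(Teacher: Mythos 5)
Your approach mirrors the paper's exactly: bound $E_t$ from~(\ref{DysonIntro}) using the second-order estimate~(\ref{Previous}) together with the $\lambda$-uniform bound of Proposition~\ref{CorMain}, pass from $E_t$ to $E_t'$ via the error identity of Proposition~\ref{PertProp2}, control the unbounded $\vec{P}$-commutators introduced in the conversion with Lemma~\ref{HardLem13}, and close with the Gronwall-type inequality of Proposition~\ref{Pert}. All the key ingredients and their roles are correctly identified.

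The one step that does not work as stated is your handling of large $\lambda$. You claim the contractivity bound $2\|G\|_{wn}$ is dominated by $(t\wedge 1)\kappa_T\lambda^2\|G\|_{wn}$ once $\lambda$ is large, but for any fixed $\lambda$ one may take $t$ small enough that $(t\wedge 1)\lambda^2\kappa_T<2$, so a time-independent bound cannot supply the required factor $(t\wedge 1)$. To be fair, you have put your finger on a subtlety that the paper's own proof also glosses over: the Gronwall factor $e^{t\|\Psi_{\lambda,3}\|}$ at the end of the paper's argument depends on $\lambda$ through $\|\Psi_{\lambda,3}\|=\lambda\|\varphi(\cdot)-\tfrac{1}{2}\{\varphi(I),\cdot\}\|$, so the paper's $\kappa_T$ is not manifestly $\lambda$-independent either. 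Restricted to the physically relevant regime $\lambda\in[0,\lambda_0]$ both proofs are fine; for unbounded $\lambda$ one would instead note that~(\ref{Previous}) already gives a $\lambda$-uniform bound on the generator difference and integrate that, rather than invoking contractivity alone.
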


\begin{proof}
The basic strategy of this proof was discussed in the introduction, and now we fill in the details.  By our conditions on $\rho$, it follows from Proposition~\ref{CorMain} that there exists an $\alpha>0$ such that:
\begin{align}\label{BigBnd}
\|\Phi_{t,\lambda}(G)\|_{wn}\leq e^{\alpha\, t }\|G\|_{wn}
\end{align}
for all $0\leq \lambda$, $G\in \Bi(L^{2}(\R^{n}))$.

 $\Phi_{t,\lambda}(G)$ satisfies the last hit integral equation:
$$
\Phi_{t,\lambda}(G)=F_{t}G+\int_{0}^{t}ds\, F_{t-s}\, \Psi_{\lambda}\,\Phi_{s,\lambda}(G).
$$
By adding and subtracting $\int_{0}^{t}ds\, F_{t-s}(\Psi_{\lambda}^{\diamond}+iM\lambda^{2}[\vec{A}^{2},\cdot])(\Phi_{s,\lambda}(G))$ from the above equation and rearranging terms, we can write:
\begin{align}\label{PreError}
F_{t}(G)+\int_{0}^{t}ds\, F_{t-s}(\Psi_{\lambda }^{\diamond}+iM\lambda^{2}[\vec{A}^{2},\cdot ] ) (\Phi_{s,\lambda}(G))-\Phi_{t,\lambda}(G)=E_{t}(\Phi_{r,\lambda}(G)), \text{ where}
\end{align}
\begin{align}\label{Error}
E_{t}(\Phi_{r,\lambda}(G))=-\int_{0}^{t}ds\, F_{t-s}\,(\Psi_{\lambda}-\Psi_{\lambda}^{\diamond})(\Phi_{s,\lambda}(G))+\lambda^{2}\int_{0}^{t}ds\, F_{t-s}\,i[\vec{A}^{2},\Phi_{s,\lambda}(G) ].
\end{align}
$E_{t}(\Phi_{\cdot,\lambda}(G))$ is understood as the error for the process $\Phi_{r,\lambda}(G)$ to the equation~(\ref{PreError}).   The process $\Phi_{r,\lambda}^{\diamond}(G)$ solves the left side of~(\ref{PreError}) exactly (i.e. with a zero error).   As discussed in the introduction, we would have liked to bound the difference between $\Phi_{r,\lambda}(G)$ and $\Phi_{r,\lambda}^{\diamond}(G)$  by bounding  the error $E_{t}(\Phi_{\cdot,\lambda}(G))$ so that we can apply Proposition~\ref{Pert}.   However, since $\Psi_{\lambda}^{\diamond}$ contains an unbounded part involving commutations with momentum operators, we cannot apply Proposition~\ref{Pert} so directly.   The main technical details of this proof concern using Proposition~\ref{PertProp2} to relate the error $E_{t}(\Phi_{sr\lambda}(G))$ to the error $E_{t}^{\prime}(\Phi_{\cdot,\lambda}(G))$ determined by the integral equation:
\begin{align*}
E_{t}^{\prime}(\Phi_{\cdot,\lambda}(G))=F_{t}^{\diamond}(G)+\int_{0}^{t}ds\,F_{t-s}^{\prime}\,L(\Phi_{s,\lambda}(G)),
\end{align*}
where $L(G)=\varphi(G)-\frac{1}{2}\{G,\varphi(I)\}$ and $F_{t}^{\diamond}$ is the group generated by $i[ \frac{1}{M}(\vec{P}+\lambda M\vec{A})^{2}+V_{1}+\lambda V_{2},\cdot]$.    The commutators with $V_{1}$ and $V_{2}$ could have been left with the perturbative part of the integral equation, but, by convention, we group them with the evolution part of the equation.

Now we  prepare a launch  of Proposition~\ref{PertProp2}.  Define the maps
\begin{eqnarray*}
\Psi_{\lambda,1}^{\prime}(G)&=&\frac{i}{2M}[\vec{P}^2,G],
\\
\Psi_{\lambda,2}^{\prime} (G)&=&i[V_{1}+ \lambda \frac{1}{2}\{\vec{P},\vec{A}\} +\lambda V_{2}+\frac{M\lambda^{2}}{2}\vec{A}^{2} ,G], \text{ and}
\\
\Psi_{\lambda,3}^{\prime}(G)&=&\lambda (\varphi(G)-\frac{1}{2}\{\varphi(I),G\} ).
\end{eqnarray*}

 By Proposition~\ref{PertProp2}, if $\Psi_{\lambda,i}^{\prime}F_{s-r}\,\Psi_{\lambda,j}^{\prime}\Phi_{\lambda,r}(G)$, $i,j=2,3$ make sense and are uniformly bounded for $s,r\in[0,t]$, then we have the following integral relation between the errors as:
\begin{align}\label{PertExam}
E_{t}^{\prime}(\Phi_{\cdot,\lambda}(G))= E_{t}(\Phi_{\cdot,\lambda}(G))+\int_{0}^{t}ds\,F_{t-s}^{\diamond}\,\Psi_{\lambda,2}^{\diamond}\, E_{s}(\Phi_{\cdot,\lambda}(G)).
\end{align}
However, $\| \Psi_{\lambda,i}^{\prime}\,F_{t-s}\Psi_{\lambda,j}^{\prime}\,\Phi_{s,\lambda}(G)\| \leq c\|\Phi_{s,\lambda}(G)\|_{wn}$ for some $c>0$ since the only unbounded  part in the $\Psi_{i}$ expressions are the powers of $P_{i}$, which get up to second-order in the products $\Psi_{i}F_{t-s}\Psi_{j}$ with $i,j\geq 2$, and  $\|\cdot \|_{wn}$ includes such terms.

Now we begin the analysis of bounding $E_{t}^{\prime}(\Phi_{\cdot,\lambda}(G))$.   The  $E_{t}(\Phi_{\cdot,\lambda}(G))$ term can be bounded as:
\begin{multline}\label{FirstBnd11}
\|E_{t}(\Phi_{\cdot,\lambda}(G))\| \leq \int_{0}^{t}ds\, \|(\Psi_{\lambda}-\Psi_{\lambda}^{\diamond} )(\Phi_{s,\lambda}(G))\|+\lambda^{2} M\int_{0}^{t}ds\, \| [\vec{A}^{2},\Phi_{s,\lambda}(G)]\|   \\ \leq \lambda^2 \int_{0}^{t}ds\, (C_{1} \|\Phi_{s,\lambda}(G)\|_{wn}+C_{2}\|\Phi_{s,\lambda}(G)\|_{wn})\leq \lambda^{2}(t\wedge 1) C^{\prime}_{t}\|G\|_{wn},
\end{multline}
where the second inequality follows from~(\ref{Previous}) and the fact that $\vec{A}^{2}$ is a bounded operator for some constants $C_{1}$, $C_{2}$.  The third inequality comes from~(\ref{BigBnd}) and of course that $\|G\|\leq \|G\|_{wn}$.   It follows that $E_{t}(\Phi_{\cdot,\lambda}(G))$ is uniformly bounded and second-order in $\lambda$ over finite time intervals for some constant $C_{t}^{\prime}$ and  is linearly small in $t$ for times near zero.

The other expression on the right side of~(\ref{FirstBnd11}) has  norm less than
$$
\int_{0}^{t}ds\,\|\Psi_{\lambda, 2}^{\diamond}E_{s}(\Phi_{\cdot,\lambda}(G))\|.
$$
It is not immediately clear how to bound this quantity since $\Psi_{\lambda,2}^{\diamond}$ involves unbounded terms involving the vector of momentum operators.   We will reorganize these terms to a form that we can analyze.
Using c.c.r. $\{\vec{P},\vec{A}\}=2\vec{A}\vec{P}+\sum_{j}A_{j}^{\prime}$ or $\{\vec{P},\vec{A}\}=2\vec{P}\vec{A}+\sum_{j}A_{j}^{\prime}$ where $A_{j}^{\prime}=[P_{j},A_{j}]$.  Using these equalities, we can rewrite $\Psi_{\lambda,2}^{\diamond}$ as the following:
$$
\Psi_{\lambda, 2}^{\diamond}(G)=i[V_{1}+\lambda V_{2}+\frac{1}{2} \lambda^2 \vec{A}^2+\lambda \frac{1}{2}\sum_{j}A_{j}^{\prime},G] +  \frac{1}{2}\lambda \sum_{j=1}^{n}i(A_{j}P_{j}G-GP_{j}A_{j}),
$$
where $V_{1}+\lambda V_{2}+ \frac{1}{2}\lambda^2 \vec{A}^2+\lambda \frac{1}{2}\sum_{j}A_{j}^{\prime}$ is  uniformly bounded in operator norm for bounded $\lambda\geq 0 $.  Define
$$
b=2(\| V_{1}\|+\|V_{2}\|+\| |\vec{A}|^{2}\| +\sum_{j=1}^{n}\|A_{j}^{\prime}\|)\wedge \sup_{j}\|A_{j}\|. $$
Then
\begin{multline*}
 \int_{0}^{t}ds\, \|\Psi_{\lambda, 2}^{\diamond}(E_{t}(\Phi_{\cdot,\lambda}( G)) )\| \leq  b \int_{0}^{t}ds\,\| E_{s}(\Phi_{\cdot,\lambda}(G))\| \\+ \lambda \sum_{j=1}^{n} b\int_{0}^{t}ds\,( \|P_{j}\,E_{s}(\Phi_{\cdot,\lambda}(G))\|+ \|E_{s}(\Phi_{\cdot,\lambda}(G))\,P_{j}\|).
\end{multline*}
We have already analyzed $E_{t}(\Phi_{\cdot,\lambda}(G))$ and the first term is bounded by $C_{t}^{\prime}t\|G\|_{wn}$.   Now let us look at the expression $\|P_{j} E_{t}(\Phi_{\lambda,r}(G))\|$.  It can be bounded by
\begin{multline*}
\|P_{j} E_{t}(\Phi_{\cdot,\lambda}(G))\|\leq \|\vec{P}\, E_{t}(\Phi_{\cdot,\lambda}(G))\|
\\ \leq \int_{0}^{t}ds\, \| \vec{P}\,(\Psi_{\lambda}-\Psi_{\lambda}^{\diamond} )(\Phi_{s,\lambda}(G))\|+\frac{\lambda^{2}}{2}M\int_{0}^{t}ds\, \|P_{j}\, [|\vec{A}|^{2}, \Psi_{s,\lambda}(G)]\|.
\end{multline*}
The second expression on the right can be bounded in operator norm by a constant multiple of $\lambda^{2}t\|G\|_{wn}$, since the $|\vec{A}|^{2}$ operator is bounded and the norm $\|\cdot\|_{wn}$ includes weights by the momentum operators $P_{j}$.   The first expression on the right side takes a little more work.   Writing
$$
\vec{P}\Psi_{\lambda}(\Phi_{t,\lambda}(G))=\Psi_{\lambda}(\vec{P}\Phi_{\lambda,r}(G))+\Psi_{\lambda}^{\prime}(\Phi_{t,\lambda}(G)), \text{ and}
$$
$$
\vec{P}\Psi_{\lambda}^{\diamond}(\Phi_{t,\lambda}(G))=\Psi_{\lambda}^{\diamond}(\vec{P}\Phi_{t,\lambda}(G))+\Psi_{\lambda}^{\prime, \diamond}(\Phi_{t,\lambda}(G)),
$$
by Lemma~\ref{HardLem13} there exists a $C_{3}$ s.t.
\begin{align}
\|(\Psi_{\lambda}-\Psi_{\lambda}^{\diamond})\vec{P} \Phi_{r,\lambda}(G)\|\leq C_{3}\lambda \| \vec{P}\,\Phi_{r,\lambda}(G)\|_{\mathbf{a}}  \leq \lambda C_{3}  \|\Phi_{r,\lambda}(G))\|_{wn}\leq  \lambda C_{3}e^{\alpha\,r}\|G\|_{wn},
\end{align}
and there exists a $C_{4}$ s.t.
\begin{align}
 \| (\Psi_{\lambda}^{\prime}-\Psi_{\lambda}^{\prime, \diamond})\Phi_{r,\lambda}(G)\|\leq C_{4}\lambda \|\Phi_{r,\lambda}(G)\|_{\mathbf{b}} \leq C_{4}\lambda \|\Phi_{r,\lambda}(G)\|_{wn}\leq \lambda C_{4} e^{\alpha\,r} \|G\|_{wn}.
\end{align}
So we have
\begin{align*}
\|P_{j} E_{s}(\Phi_{\cdot,\lambda}(G))\| \leq \int_{0}^{s}dr\, \lambda (C_{3}+C_{4})\|\Phi_{r,\lambda}(G)\|_{wn}.
\end{align*}
By similar reasoning, the terms $\| E_{s}(\Phi_{\cdot,\lambda}(G))K_{j}\|$ have the same bound.   So from Proposition~(\ref{PertExam}), we have found that there is a constant $C_{t}^{\prime \prime}$ such that
$$
\sup_{s\in [0,t]}\| E^{\prime}_{s}(\Phi_{\cdot,\lambda}(G)) \| \leq \lambda^{2}(t\wedge 1) C_{t}^{\prime \prime} \|G\|_{wn}.
$$
Hence $\Phi_{t,\lambda}(G)$ has a small error term for the integral equation with free evolution $F^{\diamond}$ and perturbation $\Psi_{\lambda,3}^{\diamond}$ .   $\Psi_{\lambda,3}^{\diamond}$ is bounded so we can apply~(\ref{Pert}) to get a bound for the distance between $\Phi_{t,\lambda}(G)$ and the solution $\Phi_{t,\lambda}^{\diamond}(G)$ of the integral equation.
\begin{align*}
\sup_{s\in [0,t]} \|\Phi_{s,\lambda}(G)-\Phi_{s,\lambda}^{\diamond}(G)\| \leq e^{t\|\Psi_{\lambda,3}\| }\sup_{s\in [0,t]} \| E_{2}(\Phi_{\cdot,\lambda}(G))  \| \leq  t \kappa_{t} \lambda^{2}\|G\|_{wn},
\end{align*}
where $\kappa_{t}=e^{t\|\Psi_{\lambda,3}\| }C_{t}^{\prime \prime}$.       Thus for any finite interval $[0,T]$, $\|\Phi_{t,\lambda}(G)-\Phi_{t,\lambda}^{\diamond}(G)\|\leq (t\wedge 1) \kappa_{T}\lambda^{2}\|G\|_{wn}$.

\end{proof}

\appendix

\begin{center}
    {\bf \Large APPENDIX \normalsize }
  \end{center}

\section{Inequalities involving errors to integral equations}

Th basic idea of the following proposition is that a process having small error in solving an integral equation is close in norm to the solution of the integral equation.  

\begin{proposition}\label{Pert}
Let $\varphi$, $F_{t} $ be bounded linear maps on $ \Bi(\Hi) $ and $F_{t}$  be  be a semigroup with $\|F_{t}\|=1$.   Let $H(s)$ be a process $H:\R^{+}\rightarrow \Bi(\Hi)$ that is  bounded in norm uniformly on finite intervals.   Define the error
$$
E_{t}(H)=F_{t}H(0)+\int_{0}^{t}ds\, F_{t-s}\varphi(H(s))-H(t),
$$
and let $\tilde{H}(t)$ be the solution of the above with $E_{t}(\tilde{H})=0$.  Then

$$
\sup_{s\in [0,t]} \| H(s)-\tilde{H}(s)\|\leq e^{t \|\varphi \| }\sup_{s\in [0,t] } \|E_{s}(H)\|.
$$

\end{proposition}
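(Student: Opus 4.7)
The plan is a direct Gronwall argument on the difference $D(t) := H(t) - \tilde{H}(t)$. First I would simply rearrange the defining identity of $E_t(H)$ into
$$H(t) = F_{t} H(0) + \int_{0}^{t} ds\, F_{t-s}\varphi(H(s)) - E_{t}(H),$$
and note that the hypothesis $E_{t}(\tilde{H}) = 0$, together with the convention that the ``solution'' $\tilde{H}$ is to be taken with matching initial data $\tilde{H}(0) = H(0)$, gives
$$\tilde{H}(t) = F_{t} H(0) + \int_{0}^{t} ds\, F_{t-s}\varphi(\tilde{H}(s)).$$
Subtracting these two identities cancels the drift term $F_{t}H(0)$ and, by linearity of $\varphi$ and of $F_{t-s}$, produces the clean integral relation
$$D(t) = \int_{0}^{t} ds\, F_{t-s}\,\varphi(D(s)) - E_{t}(H).$$

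Next I would pass to operator norms, using $\|F_{t-s}\| = 1$ and the bound $\|\varphi(D(s))\| \leq \|\varphi\|\,\|D(s)\|$, and abbreviate $M_{t} := \sup_{s \in [0,t]} \|E_{s}(H)\|$, which is non-decreasing in $t$. This yields the scalar integral inequality
$$\|D(t)\| \leq \|\varphi\| \int_{0}^{t} \|D(s)\|\, ds + M_{t}.$$
An application of the standard integral form of Gronwall's inequality then gives $\|D(t)\| \leq M_{t}\, e^{t\|\varphi\|}$. Since the right-hand side is non-decreasing in $t$, the same bound holds after taking the supremum of $\|D(s)\|$ over $s \in [0,t]$, which is exactly the advertised estimate.

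There is no serious obstacle to this proof; the only point meriting attention is the identification of the initial data and the observation that $-E_{t}(H)$ enters the Duhamel-type identity for $D(t)$ as an additive inhomogeneity rather than under a further integration against $F_{t-s}$. Uniform boundedness of $H$ on finite intervals (assumed) and of $\tilde{H}$ (which follows from its definition as a Dyson series in the bounded perturbation $\varphi$) ensures all integrals are Bochner-integrable in operator norm, and boundedness of $\varphi$ keeps the Gronwall constant $\|\varphi\|$ finite, so the exponential factor $e^{t\|\varphi\|}$ is finite on any finite interval.
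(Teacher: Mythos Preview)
Your proof is correct and takes a different route from the paper's. The paper proceeds by iterating the integral equation: starting from $H^{1}:=H$ and setting $H^{n+1}(t)=F_{t}H(0)+\int_{0}^{t}F_{t-s}\varphi(H^{n}(s))\,ds$, it notes that $H^{n}\to\tilde H$ and telescopes the successive differences to obtain an explicit Dyson-type identity expressing $\tilde H(t)-H(t)$ as a sum over $n\ge 0$ of iterated integrals $\int_{0\le t_{1}\le\cdots\le t_{n}\le t}F_{t-t_{n}}\varphi\cdots\varphi\bigl(E_{t_{1}}(H)\bigr)\,dt_{1}\cdots dt_{n}$; the bound then follows by summing $\|\varphi\|^{n}t^{n}/n!$. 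You instead subtract the two integral equations once to get $D(t)=\int_{0}^{t}F_{t-s}\varphi(D(s))\,ds-E_{t}(H)$ and close with Gronwall. The two arguments are essentially dual: the paper's series is exactly what one obtains by iterating your Duhamel relation for $D$, while Gronwall packages that iteration into a single scalar inequality. Your route is shorter and sidesteps the convergence check for the iterates; the paper's route has the minor advantage of yielding an exact formula for $\tilde H-H$, which could be useful if one wanted more refined control of the difference. Your remark that $\tilde H(0)=H(0)$ is implicit in the statement is correct and is precisely how the paper uses it.
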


\begin{proof}
 By the iterative definition   $H^{n+1}(t)= F_{t}(H^{n}(0))+\int_{0}^{t}ds\, F_{t-s} H^{n}(s)$ with $H^{1}(t)= H(t)$, $H^{n+1}\rightarrow \tilde{H}$ uniformly over finite intervals.  By a telescoping series occurring on the left side of the following equation we have:
\begin{multline*}
F_{t}(E_{0}(H))+\sum_{n=1}\int_{0\leq t_{1} \leq \cdots \leq t_{n} \leq t} dt_{1}\cdots dt_{n}\, F_{t-t_{n}} \varphi \cdots \varphi F_{t_{1}}(E_{t_{1}}(H)) \\ =F_{t}(H(0))+\sum_{n=1}\int_{0\leq t_{1} \leq \cdots \leq t_{n} \leq t}dt_{1}\cdots dt_{n}\, F_{t-t_{n}} \varphi \cdots \varphi F_{t_{1}}(H(0))-H(t)=\tilde{H}(t)-H(t).
\end{multline*}
  Taking the  norm of both sides of the above equation, then
$$
 e^{t\| \varphi\| }\sup_{0\leq s \leq t} \|E_{s}(H)\| \geq \|\tilde{H}(t)-H(t)\|,
 $$
which implies the result.

\end{proof}

Solutions to two different integral equations can represent solutions to the same differential equation.           However, if a process $H(t)$ is not a solution to these integral equations, then it will have different errors for the different integral equations.   The following proposition gives an identity relating these errors.

\begin{proposition}[Integral equation error identities]\label{PertProp2}

Let $\Psi_{1}$,$\Psi_{2}$, and $\Psi_{3}$ be linear maps acting on dense  subspaces of  $\mathcal{B}(\mathcal{H})$.  Also let $F_{1}(t)$, $F_{2}(t)$ be semigroups with $\|F_{i}(t)\| \leq 1$ generated by $\Psi_{1}$ and $\Psi_{1}+\Psi_{2}$, respectively.  Finally, let $(G(s))\in \mathcal{B}(\mathcal{H})$ be a process satisfying:
\begin{enumerate}
\item
$G(t)\in \bigcap_{i\geq 2}D(\Psi_{i}F_{s})\cap \bigcap_{i,j\geq 2} D(\Psi_{i}F_{s}\Psi_{j})$ for all $t$ and $s$, and
\item
$G(t)$,$\Psi_{i}(G(t))$, and $\Psi_{i}F_{t-s}\Psi_{j}(G(s))$ for $i,j\geq 2$ are uniformly bounded in operator norm for $t,s$ in finite time intervals.
\end{enumerate}

Define errors $E^{1}_{t}(G)$, $E^{2}_{t}(G)$ for the process $G(s)$ at time $s$ as:
\begin{eqnarray*}
E^{(1)}_{t}(G)&=&F^{(1)}_{t}G_{0}+\int_{0}^{t}ds\, F^{(1)}_{t-s}(\Psi_{2}+\Psi_{3})G(s)-G(t),\\
E^{(2)}_{t}(G)&=&F^{(2)}_{t}G_{0}+\int_{0}^{t}ds\, F^{(2)}_{t-s}\Psi_{3}G(s)-G(t).
\end{eqnarray*}
Then we have the integral identity:
\begin{align}\label{Firs}
E^{(2)}_{t}(G)=E^{(1)}_{t}(G)+\int_{0}^{t}ds\, F^{(2)}_{t-s}\Psi_{2}E^{(1)}_{s}(G),
\end{align}
and inversely
\begin{align}\label{Seco}
E^{(1)}_{t}(G)=E^{(2)}_{t}(G)-\int_{0}^{t}ds\, F^{(1)}_{t-s}\Psi_{2}E^{(2)}_{s}(G).
\end{align}

\end{proposition}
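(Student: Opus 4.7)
The plan is to reduce both identities to the Duhamel/Dyson formula relating the two semigroups. Since $F^{(1)}_t$ is generated by $\Psi_1$ and $F^{(2)}_t$ is generated by $\Psi_1+\Psi_2$, the two semigroups are linked by the standard perturbation identity
\begin{equation*}
F^{(2)}_t = F^{(1)}_t + \int_0^t ds\, F^{(2)}_{t-s}\,\Psi_2\,F^{(1)}_s,
\qquad
F^{(1)}_t = F^{(2)}_t - \int_0^t ds\, F^{(1)}_{t-s}\,\Psi_2\,F^{(2)}_s,
\end{equation*}
which is obtained by differentiating $s\mapsto F^{(1)}_{t-s}F^{(2)}_s$ and integrating back. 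Assumption (1) of the proposition is exactly what lets us apply these formulas to $G(0)$ and to the various integrands arising below, and assumption (2) ensures that each integrand is uniformly bounded on compact intervals, so that Fubini's theorem may be used freely.

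To prove (\ref{Firs}), I would start from the right-hand side $E^{(1)}_t(G)+\int_0^t ds\, F^{(2)}_{t-s}\Psi_2 E^{(1)}_s(G)$, substitute the definition of $E^{(1)}_s(G)$ inside the integral, and group the resulting terms into three pieces. First, the $G(0)$ contributions $F^{(1)}_tG(0)+\int_0^t ds\, F^{(2)}_{t-s}\Psi_2 F^{(1)}_s G(0)$ collapse to $F^{(2)}_tG(0)$ by the first Duhamel identity. Second, the $-G(t)$ and $-\int_0^t ds\, F^{(2)}_{t-s}\Psi_2 G(s)$ terms are exactly what is needed for $E^{(2)}_t(G)$ once the remaining $\Psi_3$ pieces have been arranged. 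Third, for the double-integral term
\begin{equation*}
\int_0^t ds\, F^{(2)}_{t-s}\Psi_2\int_0^s dr\, F^{(1)}_{s-r}(\Psi_2+\Psi_3)G(r),
\end{equation*}
I would swap the order of integration via Fubini to get $\int_0^t dr\,\bigl(\int_r^t ds\, F^{(2)}_{t-s}\Psi_2 F^{(1)}_{s-r}\bigr)(\Psi_2+\Psi_3)G(r)$, and then recognize the inner $s$-integral as $F^{(2)}_{t-r}-F^{(1)}_{t-r}$ applied to $(\Psi_2+\Psi_3)G(r)$ by the Duhamel identity again (after the substitution $u=s-r$). This converts the double integral into $\int_0^t dr\, F^{(2)}_{t-r}(\Psi_2+\Psi_3)G(r) - \int_0^t dr\, F^{(1)}_{t-r}(\Psi_2+\Psi_3)G(r)$. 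The second piece cancels the $\int_0^t ds\, F^{(1)}_{t-s}(\Psi_2+\Psi_3)G(s)$ already present in $E^{(1)}_t(G)$, and combining the remaining $\Psi_2$ contributions with the $-\int_0^t ds\, F^{(2)}_{t-s}\Psi_2 G(s)$ term leaves exactly $\int_0^t ds\, F^{(2)}_{t-s}\Psi_3 G(s)$, which completes the identification with $E^{(2)}_t(G)$.

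For the reverse identity (\ref{Seco}), I would run the same argument with the roles of $F^{(1)}$ and $F^{(2)}$ swapped, using the second Duhamel formula (the one with a minus sign); alternatively, (\ref{Seco}) can be deduced from (\ref{Firs}) by substituting (\ref{Firs}) into its own right-hand side, applying Fubini once more, and collapsing the resulting double integral via Duhamel to recover a single $F^{(1)}_{t-s}\Psi_2$ kernel. The only real obstacle is bookkeeping: all the compositions $F^{(2)}_{t-s}\Psi_2 F^{(1)}_{s-r}\Psi_j G(r)$ that appear in the expansion must be well defined and jointly integrable so that Fubini and the Duhamel formula can be applied termwise. This is precisely the content of hypotheses (1) and (2) — they are the minimal regularity needed to turn the formal manipulations above into rigorous equalities in $\Bi(\Hi)$.
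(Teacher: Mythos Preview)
Your proposal is correct and is precisely the approach the paper takes: the paper's proof simply says that (\ref{Firs}) ``follows from a little algebra involving two applications of the identity'' $F^{(2)}_t = F^{(1)}_t + \int_0^t ds\, F^{(2)}_{t-s}\Psi_2 F^{(1)}_s$, and that (\ref{Seco}) is similar. You have written out exactly that algebra, using the Duhamel identity once on the $G(0)$ terms and once (after Fubini) on the double integral, which is the intended argument.
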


\begin{proof}
~(\ref{Firs}) follows from a little algebra involving two applications of the identity:
\begin{align}\label{thir}
F^{(2)}_{t}(G)=F^{(1)}_{t}(G)+\int_{0}^{t}ds\,F^{(2)}_{t-s}\Psi_{2}F^{(1)}_{s}(G),
\end{align}
and~(\ref{Seco}) is similar.

\end{proof}

\section{Expressions for $V_{1}$, $V_{2}$, $\vec{A}$, and $\varphi $}\label{Form}

The forms for $V_{1}$, $V_{2}$ , $\vec{A}$, and $\varphi $ can be written in terms of an integral kernel for $\rho$ as the following:

\begin{eqnarray}
V_{1}&=& \mathit{c}_{\mathbf{n}}\int_{\R^{+}}dk \,|k|^{-1}\int_{|\vec{v}_{1}|=|\vec{v}_{2}|=k }\, d\vec{v}_{1} d\vec{v}_{2}\,  \rho(\vec{v}_{1}, \vec{v}_{2} )\, e^{i\vec{X}(\vec{v}_{1}-\vec{v}_{2})},\\
V_{2}&=& \mathit{c}_{\mathbf{n}}\int_{\R^{+}}dk\, |k|^{-1} \int_{|\vec{v}_{1}|=|\vec{v}_{2}|=k }d\vec{v}_{1} d\vec{v}_{2}\,   (\vec{v}_{1}+\vec{v}_{2})\nabla_{T}\rho(\vec{v}_{1},\vec{v}_{2}) \, e^{i\vec{X}(\vec{v}_{1}-\vec{v}_{2})},\\
\vec{A}&=& \mathit{c}_{\mathbf{n} }\int_{\R^{+}}dk\, |k|^{-1} \int_{|\vec{v}_{1}|=|\vec{v}_{2}|=k }d\vec{v}_{1} d\vec{v}_{2}\,  \nabla_{T}\rho(\vec{k},\vec{v} )\, e^{i\vec{X}(\vec{v}-\vec{k})},\\
\varphi(G)&=& \mathit{c}_{\mathbf{n}}^{2}\int_{\R^{n}}d\vec{k}\,|\vec{k}|^{-2}\int_{|\vec{v}_{1}|=|\vec{v}_{2}|=|\vec{k}| } d\vec{v}_{1}d\vec{v}_{2}\,   \rho(\vec{v}_{1}, \vec{v}_{2} )\,e^{i\vec{X}(-\vec{v}_{1}+\vec{k})}Ge^{-i\vec{X}(-\vec{v}_{2}+\vec{k})},
\end{eqnarray}
where the constant $\mathit{c}_{n}$ depends on the dimension $n=1,3$ and can be can found in~\cite{Clark}.  The expressions are correct for dimension $n=2$ also except that there is an extra term missing in the expression for $V_{2}$.  In dimension $n=1$, the integrals over surfaces $|\vec{v}_{1}|=|\vec{v}_{2}|=k$ are replaced by sums.  A Kraus decomposition for $\varphi$ can be found, for instance, by taking the singular value decomposition $\rho=\sum_{j}\lambda_{j}|f_{j}\rangle \langle f_{j}|$ where the $f_{j}$'s are orthonormal.  The $m_{j,\vec{k}}$'s operator of~(\ref{Mult}) can then be taken as      
\begin{align}\label{General}
m_{j,\vec{k}}=\sqrt{\lambda_{j}}\mathit{c}_{\mathbf{n}}\int_{|\vec{v}|=|\vec{k}|}d\vec{v}\,f_{j}(\vec{v})\,e^{i\vec{X}(-\vec{v}+\vec{k})}.
\end{align}

\section{Existence and uniqueness of dynamical semigroups}

Treating the equations~(\ref{PoissonProc}) and~(\ref{LimitEQN}) rigorously is not difficult, since the unbounded portion of the generators  are merely the kinetic part involving $\vec{P}$.   In particular, the assumption that $\|\rho\|_{wtn}<\infty$, which we have used to bound the error between $\Phi_{t,\lambda}$ and $\Phi_{t,\lambda}^{\diamond}$, implies that $V_{1}$, $V_{2}$, $\vec{A}$, and $\varphi(I)$   are bounded. 

\begin{lemma}\label{CPSG}
Let $D=\{\psi \in L^{2}(\R^{n}) | \| |\vec{P}|^{2}\psi \|_{2}<\infty \}$ and define
\begin{multline}
\mathcal{L}(\psi_{1};G;\psi_{2})=\big\langle ( i\frac{1}{2M} \vec{P}^{2}-\frac{1}{\lambda}I) \psi_{1}|G \psi_{2}\big\rangle\\+ \big\langle \psi_{1}|G (-i\frac{1}{2M}\vec{P}^{2}+\frac{1}{\lambda}I) \psi_{2}\big\rangle+\big\langle \psi_{1}| \Tr_{2}[(I\otimes \rho)\Sca_{\lambda}^{*}(G\otimes I)\Sca_{\lambda}] \psi_{2}\big\rangle, and
\end{multline}
\begin{multline}
\mathcal{L}^{\diamond}(\psi_{1};G;\psi_{2})=\big\langle \big(i \frac{1}{2M}(\vec{P}+\lambda M\vec{A})^{2}+iV_{1}+i\lambda V_{2}-\frac{\lambda}{2}\varphi(I)\big) \psi_{1}|G \psi_{2}\big\rangle\\+ \big\langle \psi_{1}|G \big( -i\frac{1}{2M}(\vec{P}+\lambda M\vec{A})^{2}-iV_{1}-i\lambda V_{2}-\frac{\lambda}{2}\varphi(I)\big) \psi_{2}\big\rangle+\lambda \big\langle \psi_{1}| \varphi(G)  \psi_{2}\big\rangle.
\end{multline}
There exist unique, conservative, semigroups of maps $\Phi_{t,\lambda}$ and $\Phi_{t,\lambda}^{\diamond}$ such that 
\begin{align*}
\frac{d}{dt}\langle \psi_{1}|\Phi_{t,\lambda}(G)\psi_{2}\rangle=\mathcal{L}(\psi_{1};\Phi_{t,\lambda}(G);\psi_{2})\text{, and } \frac{d}{dt}\langle \psi_{1}|\Phi_{t,\lambda}^{\diamond}(G)\psi_{2}\rangle=\mathcal{L}^{\diamond}(\psi_{1};\Phi_{t,\lambda}^{\diamond}(G);\psi_{2}).
\end{align*}
\end{lemma}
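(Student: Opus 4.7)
My plan is to construct both semigroups explicitly as Dyson series and to deduce uniqueness by a Gronwall--Duhamel argument at the form level. In each case the generator splits as a self-adjoint Hamiltonian derivation $i[H,\cdot]$ (with unbounded $H$) plus a bounded, completely positive, Lindblad-type perturbation, so the standard strategy for such equations applies, provided one tracks the domain $D=\{\psi\in L^{2}(\R^{n}):\||\vec{P}|^{2}\psi\|_{2}<\infty\}$ carefully throughout.

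For $\Phi_{t,\lambda}$, the free kinetic Hamiltonian $H_{0}=\vec{P}^{2}/(2M)$ is self-adjoint on $D$, so $F_{t}(G)=e^{itH_{0}}G e^{-itH_{0}}$ is an isometric group on $\Bi(L^{2}(\R^{n}))$. The map $\mathcal{P}_{\lambda}(G):=\Tr_{2}[(I\otimes\rho)\Sca_{\lambda}^{*}(G\otimes I)\Sca_{\lambda}]$ is bounded and completely positive, and unitarity of $\Sca_{\lambda}$ together with $\Tr\rho=1$ yields $\|\mathcal{P}_{\lambda}\|=1$ and $\mathcal{P}_{\lambda}(I)=I$. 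The Poisson-type series~(\ref{PoissonInt}) then converges absolutely in operator norm thanks to the factor $e^{-t/\lambda}$, each summand is CP and unital, and differentiating $\langle\psi_{1}|\Phi_{t,\lambda}(G)\psi_{2}\rangle$ term-by-term for $\psi_{1},\psi_{2}\in D$ recovers $\mathcal{L}$.

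For $\Phi_{t,\lambda}^{\diamond}$, the assumption $\|\rho\|_{wtn}<\infty$ together with the formulas in Appendix~\ref{Form} makes $V_{1},V_{2},\vec{A}$ and $\varphi(I)$ bounded, so $H_{\lambda}:=\frac{1}{2M}(\vec{P}+\lambda M\vec{A})^{2}+V_{1}+\lambda V_{2}$ is self-adjoint on $D$ by Kato--Rellich perturbation of $\vec{P}^{2}/(2M)$. To build in complete positivity I would absorb the anticommutator into the free evolution: the operator $K:=iH_{\lambda}+\frac{\lambda}{2}\varphi(I)$ is m-accretive since its real part $\frac{\lambda}{2}\varphi(I)\geq 0$ is a bounded self-adjoint perturbation of the skew-adjoint $iH_{\lambda}$, so $W_{t}:=e^{-tK}$ is a strongly continuous contraction semigroup on $L^{2}(\R^{n})$ and $\tilde{F}_{t}^{\diamond}(G):=W_{t}^{*}G W_{t}$ defines a CP contraction semigroup on $\Bi(L^{2}(\R^{n}))$. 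The remaining piece $\lambda\varphi(\cdot)$ is bounded and CP, so the Dyson expansion
\begin{equation*}
\Phi_{t,\lambda}^{\diamond}(G)=\sum_{n\geq 0}\lambda^{n}\int_{0\leq t_{1}\leq\cdots\leq t_{n}\leq t}\tilde{F}_{t-t_{n}}^{\diamond}\,\varphi\,\tilde{F}_{t_{n}-t_{n-1}}^{\diamond}\,\varphi\cdots\varphi\,\tilde{F}_{t_{1}}^{\diamond}(G)\,dt_{1}\cdots dt_{n}
\end{equation*}
converges in operator norm, is manifestly CP, and satisfies $\mathcal{L}^{\diamond}$ on the form. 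Conservativity reduces to $\mathcal{L}^{\diamond}(\psi_{1};I;\psi_{2})=0$, which holds because $\frac{\lambda}{2}\{\varphi(I),I\}$ cancels $\lambda\varphi(I)$ while the Hamiltonian contributions cancel by self-adjointness.

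Uniqueness in both cases then follows by subtracting two form-solutions, writing the difference as a Duhamel integral against the bounded perturbation, and applying Gronwall. The step I expect to require the most care is justifying term-by-term form-differentiation of the Dyson series, which reduces to showing that the intermediate operator products send $D\times D$ to forms that are jointly continuous in their time arguments; this rests on self-adjointness of $H_{0}$ and $H_{\lambda}$ on $D$ and on the uniform operator-norm boundedness of $\mathcal{P}_{\lambda}$, $\varphi$ and $\varphi(I)$ in $\lambda$.
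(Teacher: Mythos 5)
Your proposal follows essentially the same strategy as the paper: for $\Phi_{t,\lambda}^{\diamond}$, decompose the form generator into a contraction (pre-/post-multiplication by $e^{-tK}$ with $K$ m-accretive) plus the bounded CP perturbation $\lambda\varphi$, build the solution by Dyson series, and read off complete positivity term by term; for $\Phi_{t,\lambda}$, use the Poisson expansion~(\ref{PoissonInt}); establish self-adjointness of the magnetic Hamiltonian on $D$ via Kato--Rellich. The two places where you deviate are worth flagging. First, your conservativity argument is a shortcut: you verify $\mathcal{L}^{\diamond}(\psi_{1};I;\psi_{2})=0$ directly and invoke uniqueness, whereas the paper constructs a second Dyson series in $F_{t}^{\diamond}$ and $\varphi(\cdot)-\tfrac{1}{2}\{\varphi(I),\cdot\}$ solving~(\ref{DysonIntro2}), identifies it with $\Phi_{t,\lambda}^{\diamond}$, and then observes each term past the zeroth vanishes at $G=I$. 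Your route is shorter, though both hinge on uniqueness; note also that as written in the statement the first term of $\mathcal{L}^{\diamond}$ carries $+i\frac{1}{2M}(\vec{P}+\lambda M\vec{A})^{2}$ rather than the $-i$ needed for $\mathcal{L}^{\diamond}(\psi_{1};I;\psi_{2})$ to vanish, so you are implicitly correcting a sign (as one must for $L$, claimed accretive, to actually have non-negative real part). Second, for uniqueness you invoke a Duhamel--Gronwall argument at the form level, while the paper delegates the passage from the form equation to the integral equation $\langle \psi_{1}|G_{t}\psi_{2}\rangle = \langle e^{-tL}\psi_{1}|G_{t}e^{-tL}\psi_{2}\rangle+\int_{0}^{t}ds\,\langle e^{-(t-s)L}\psi_{1}|\varphi(G_{t})e^{-(t-s)L}\psi_{2}\rangle$ to the result in~\cite{Holevo2}; your Duhamel step requires exactly that equivalence, so it is not more elementary in substance, only in presentation. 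These are small variations on the same scheme rather than a genuinely different argument.
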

\begin{proof}
The arguments for $\Phi_{t,\lambda}$ and $\Phi_{t,\lambda}^{\diamond}$ are similar, so we will just prove the result for $\Phi_{t,\lambda}^{\diamond}$, since it has slightly more detail.   First we argue that in the domain $D$, the operator
$$L= i\big( \frac{1}{2M}(\vec{P}+\lambda M\vec{A})^{2}+V_{1}+\lambda V_{2}\big)-\frac{\lambda}{2}\varphi(I),$$
is maximal accretive.   By~\cite[Lem. B.1]{Clark} the operators $V_{1}$, $V_{2}$, $\vec{A}$, $\varphi(I)$, and $[P_{j},A_{j}]$ for $j=1,\cdots, n$  are bounded.  The only unbounded terms are $B=\frac{1}{2M}\vec{P}^{2}+\{\vec{A},\vec{P}\}$, which is contained only in the imaginary part of $L$, so we just need show that $B$ is self-adjoint with domain $D$.   However, $\{\vec{A},\vec{P}\}$ is a relatively bounded perturbation of $\frac{1}{2M}\vec{P}^{2}$.    By rewriting  $\vec{P}\vec{A}=\sum_{j}[P_{j}, A_{j}]+\vec{A}\vec{P}$, we have that 
$$\|\{\vec{A},\vec{P}\}\psi \|_{1}\leq  \|\sum_{j}[P_{j}, A_{j}]\| \|\psi\|_{2}+\| |\vec{A}|^{2}\|^{2} \| |\vec{P}|\psi\|_{2}$$  
 for $\psi\in D$.  Applying basic inequalities  $\| |\vec{P}|\psi\|_{2}\leq \frac{1}{\sqrt{2}\epsilon}\|\phi\|_{2}+\frac{\epsilon}{\sqrt{2}}\| \vec{P}^{2}\psi\|_{2}$.   So the coefficient in front of $\|\frac{1}{2M}\vec{P}^{2}\psi\|_{2}$ in  a relative bound inequality can be made  $<1$ and by~\cite{Kato}, $B$ is a self-adjoint operator.  

The operators $\partial_{j}A_{j}$, $V_{1}$, $V_{2}$, $\vec{A}$, and $\varphi(I)$ are all bounded when $\|\rho\|_{wtn}<\infty$.    By~\cite{Holevo2}, it follows that a process $G_{t}\in \Bi(L^{2}(\R^{n}))$ solving
\begin{align}\label{FormGen1} \frac{d}{dt}\langle \psi_{1}|G_{t}\psi_{2}\rangle=\mathcal{L}^{\diamond}(\psi_{1};G_{t};\phi_{2})\end{align}
is equivalent to solving 
$$\langle \psi_{1}| G_{t} \psi_{2}\rangle = \langle e^{-tL}\psi_{1}|G_{t} e^{-tL}\psi_{2}\rangle+\int_{0}^{t}ds\, \langle e^{-(t-s)L}\psi_{1}|\varphi(G_{t}) e^{-(t-s)L}\psi_{2}\rangle.$$
 However, since $e^{-tL}$ and $\varphi$ are bounded maps, the solution $G_{t}$ to the later can be constructed as through a Dyson series in $W_{t}(\cdot)=e^{-tL^{*}}(\cdot)e^{-tL}$ and $\varphi$ operating on $G_{0}=G$.  Call $G_{t}=\Gamma_{t,\lambda}^{\diamond}(G)$, then the complete positivity of the maps $\Gamma_{t,\lambda}^{\diamond}$ follows from the complete positivity of $W_{t}$ and $\varphi$.   

The last thing to show is that $\Phi_{t,\lambda}^{\diamond}$ is conservative.   However, a Dyson series expansion in $F_{t}^{\diamond}$ and $\varphi(\cdot)-\frac{1}{2}\{\varphi(I),\cdot \}$ solving the integral equation~(\ref{DysonIntro2}) (with zero error term) will also be a solution to~(\ref{FormGen1}) and hence be equal to $\Phi_{t,\lambda}^{\diamond}(G)$.   Finally, since $F_{t}^{\diamond}(I)=I$ and $\varphi(I)-\frac{1}{2}\{\varphi(I), I\}=0$, it follows that $\Phi_{t,\lambda}^{\diamond}(I)=I$ .

\end{proof}
\section*{Acknowledgments}
For this work I benefit from the Belgian Interuniversity Attraction Poles Programme P6/02.   I would like to thank  Bruno Nachtergaele for suggestions at the early stages of this manuscript.     Partial financial support has come from Graduate Student Research (GSR) fellowships  funded by the National Science Foundation (NSF  \# DMS-0303316 and DMS-0605342).


\begin{thebibliography}{99}
\bibitem{AsymSysm}R. Adami, R. Figari, D. Finco, A. Teta: \emph{On the Asymptotic Dynamics of a Quantum System composed by Heavy and Light Particles}, Comm. Math. Phys. \textbf{268}, 819--852 (2006).
\bibitem{Solvable}S. Albeverio, F. Gesztesy, R. Hoegh-Krohn, H. Holden:   \emph{Solvable Models in Quantum Mechanics},  Springer-Verlag, Berlin, (1988).
\bibitem{Dyn}R. Alicki,M. Fannes.  \emph{Quantum Dynamical Systems}, Oxford university press, 2001
\bibitem{Carlone}C. Cacciapuoti, R. Carlone, R. Figari: \emph{Decoherence Induced by Scattering: a Three-Dimensional Model}, J. Phys.  \textbf{A38}, 4933--4946 (2005).
\bibitem{Clark} J. Clark: \emph{The reduced effect of a single scattering with a low-mass particle via a point interaction},  arXiv: 0807.5116 (2008). 
\bibitem{Cheb}A. M. Chebotarev, F. Fagnola: \emph{Sufficient Conditions for Conservativity of Minimal Quantum Dynamical Semigroups}, J. Funct. Anal. \textbf{153}, 382--404 (1998).
\bibitem{Two}D. D\"urr, R. Figari, A. Teta:  \emph{Decoherence in a Two Particle Model},  J. Math. Phys. \textbf{45}, 1291-1309 (2004).
\bibitem{ESL}M. R. Gallis, G. N. Fleming: \emph{Environmental and Spontaneous Localization},  Phys. Rev. \textbf{A42}, 38-48 (1990).
\bibitem{Holevo2} A. S. Holevo:\emph{On Conservativity of Covariant Dynamical Semigroups},  Rep. Math. Phys. \textbf{33}, 95--110 (1993).
\bibitem{Cov}A. S. Holevo: \emph{Covariant Quantum Markovian Evolutions}, J. Math. Phys. \textbf{37}, 1812--1832  (1996).
\bibitem{Sipe} K. Hornberger, J. Sipe:  \emph{Collisional Decoherence Reexamined},  Phys. Rev. \textbf{A68}, 012105 (2003).
\bibitem{Joos}E. Joos, H. D. Zeh: \emph{The Emergence of Classical Properties Through Interaction with the Environment}, Z. Phys.  \textbf{B59}, 223--243 (1985).
\bibitem{Kato} T. Kato: \emph{Perturbation Theory for Linear Operators}, 2nd edition.   Springer-Verlag, 1984.
\bibitem{EM}H. Leinfelder, C. G. Simander:  \emph{Schr\"odinger Operators with Singular Magnetic vector potentials},  Math. Z.  \textbf{176},  1--19 (1981).
\bibitem{Lindblad}G. Lindblad: \emph{On the Generators of Quantum Dynamical Semigroups}, Comm. Math. Phys. \textbf{48}, 119--130 (1976).
\bibitem{Vacchini}B. Vacchini: \emph{Master-Equations for the Study of Decoherence}, Int. Journ. Theor. Phys. \textbf{44}, 1011--1021 (2005).
\end{thebibliography}
\end{document}